\newcolumntype{P}[1]{>{\centering\arraybackslash}p{#1}}
\newlength{\RoundedBoxWidth}
\newsavebox{\GrayRoundedBox}
\newenvironment{GrayBox}[1]%
   {\setlength{\RoundedBoxWidth}{.93\textwidth}
    \def\boxheading{#1}
    \begin{lrbox}{\GrayRoundedBox}
       \begin{minipage}{\RoundedBoxWidth}}%
   {   \end{minipage}
    \end{lrbox}
    \begin{center}
    \begin{tikzpicture}%
       \node(Text)[draw=black!20,fill=white,rounded corners,%
             inner sep=2ex,text width=\RoundedBoxWidth]%
             {\usebox{\GrayRoundedBox}};
        \coordinate(x) at (current bounding box.north west);
        \node [draw=white,rectangle,inner sep=3pt,anchor=north west,fill=white] 
        at ($(x)+(6pt,.75em)$) {\boxheading};
    \end{tikzpicture}
    \end{center}}     
\newenvironment{defproblemx}[2][]{\noindent\ignorespaces%
                                \FrameSep=6pt%
                                \parindent=0pt%
                \vspace*{-1.5em}
                \ifthenelse{\isempty{#1}}{%
                  \begin{GrayBox}{\textsc{#2}}%                
                }{%
                  \begin{GrayBox}{\textsc{#2} parameterized by~{#1}}%  
                }
                \begin{tabular*}{\textwidth}{@{\hspace{.1em}} >{\itshape} p{1.8cm} p{0.8\textwidth} @{}}%        
            }{
                \end{tabular*}%
                \end{GrayBox}%
                \ignorespacesafterend
            }  
\title{Exact and Approximation Algorithms for Many-To-Many Point Matching in the Plane\footnote{A preliminary version of the paper has been accepted at ISAAC 2021.}} 
\author{Sayan Bandyapadhyay}{Department of Informatics, University of Bergen, Norway}{Sayan.Bandyapadhyay@uib.no}{https://orcid.org/0000-0001-8875-0102}{}
\author{Anil Maheshwari}{School of Computer Science, Carleton University}{anil@scs.carleton.ca}{orcid.org/0000-0002-1274-4598}{}
\author{Michiel Smid}{School of Computer Science, Carleton University}{michiel@scs.carleton.ca}{orcid.org/0000-0003-1955-4612}{}
\authorrunning{S. Bandyapadhyay, A. Maheshwari, M. Smid 
}
\keywords{Many-to-many matching, bipartite, planar, geometric, approximation} %TODO mandatory; please add comma-separated list of keywords
\begin{document}

\maketitle

%TODO mandatory: add short abstract of the document
\begin{abstract}
Given two sets $S$ and $T$ of points in the plane, of total size $n$, a {many-to-many} matching between $S$ and $T$ is a set of pairs $(p,q)$ such that $p\in S$, $q\in T$ and for each $r\in S\cup T$, $r$ appears in at least one such pair. The {cost of a pair} $(p,q)$ is the (Euclidean) distance between $p$ and $q$. In the {minimum-cost many-to-many matching} problem, the goal is to compute a many-to-many matching such that the sum of the costs of the pairs is minimized. This problem is a restricted version of minimum-weight edge cover in a bipartite graph, and hence can be solved in $O(n^3)$ time. In a more restricted setting where all the points are on a line, the problem can be solved in $O(n\log n)$ time [Colannino, Damian, Hurtado, Langerman, Meijer, Ramaswami, Souvaine, Toussaint; Graphs Comb., 2007]. However, no progress has been made in the general planar case in improving the cubic time bound. In this paper, we obtain an $O(n^2\cdot poly(\log n))$ time exact algorithm and an $O( n^{3/2}\cdot poly(\log n))$ time $(1+\epsilon)$-approximation in the planar case. Our results affirmatively address an open problem posed in [Colannino et al., Graphs Comb., 2007]. 
\end{abstract}

\section{Introduction}
Let $G=(V=S\cup T, E)$ be a simple bipartite graph where each edge has a non-negative real weight, and no vertex is isolated. The \emph{many-to-many} matching problem on $G$ is to find a subset of edges  $E'\subseteq E$ of minimum total weight such that for each vertex $v\in V$ there is an edge in $E'$ incident on $v$. This is often referred to as the  \emph{minimum-weight edge cover} problem. A standard method to compute a minimum-weight edge cover of $G$ in polynomial-time is to reduce the problem to the minimum-weight perfect matching problem on an equivalent graph, see \cite{schrijver-book,ColanninoDHLMRST07,eiter1997distance,FerdousPK18}. Since the reduction takes $O(|V|+|E|)$ time, the running time is the same as the fastest known algorithm for computing a minimum-weight perfect matching. A faster $\frac{3}{2}$-approximation algorithm is proposed in \cite{FerdousPK18}. 
%It selects the edges of $G$ greedily, and the analysis uses the primal-dual linear programs. 

Motivated by the computational problems in musical rhythm theory, Colannino et al. \cite{ColanninoDHLMRST07} studied the many-to-many matching problem in a geometric setting. Suppose we are given two sets $S$ and $T$ of points in the plane, of total size $n$. A \emph{many-to-many} matching between $S$ and $T$ is a set of pairs $(p,q)$ such that $p\in S$, $q\in T$ and for each $r\in S\cup T$, $r$ appears in at least one such pair. The \emph{cost of a pair} $(p,q)$ is the (Euclidean) distance $d(p,q)$ between $p$ and $q$. The \emph{cost of a set of pairs} is the sum of the costs of the pairs in the set. In the geometric \emph{minimum-cost many-to-many matching} problem, the goal is to compute a many-to-many matching such that the cost of the corresponding set of pairs is minimized. For points on a line,  an $O(n\log n)$ time dynamic-programming algorithm is proposed in \cite{ColanninoDHLMRST07}. In the same paper, the importance of the many-to-many matching problem for points in the plane is stated in the context of melody matching. Furthermore, the authors state the task of generalizing their work to this version in the plane as an important open problem, especially since geometry has helped in designing efficient algorithms for the computation of the minimum-weight perfect matching for points in the plane (see, e.g., \cite{vaidya1989geometry,varadarajan1999approximation}). Several  variants of the $1$-dimensional many-to-many matching problem are considered in \cite{Rajabi-AlniB16,abs-1904-05184,abs-1904-03015}.

\subsection{Our Results and Techniques}
In this work, we design several exact and approximation algorithms for minimum-cost many-to-many matching in the plane, thus affirmatively addressing the open problem posed in \cite{ColanninoDHLMRST07}.  First, we obtain an $O(n^{2}\cdot poly(\log n))$ time\footnote{We use the notation $poly()$ to denote a polynomial function} exact algorithm for this problem using a connection to minimum-weight perfect matching. We note that our time-bound matches the time bound for solving minimum-weight bipartite matching for points in the plane. Next, for any $\epsilon > 0$, we obtain a $(1+\epsilon)$-approximation algorithm for this problem with improved $O((1/\epsilon^c)\cdot n^{3/2}\cdot poly(\log n))$ running time for some small constant $c$. We also obtain a simple $2$-approximation in $O(n\log n)$ time. 

Next, we give an overview of our techniques. A major reason behind the scarcity of results for geometric many-to-many-matching is the lack of techniques to directly approach this problem. The $O(n^3)$ algorithm known for general graphs reduces the problem to (minimum-weight) bipartite perfect matching and uses a graph matching  algorithm to solve the problem on the new instance. However, this standard reduction \cite{keijsper1998efficient} from many-to-many matching to regular matching changes the weights of the edges in a convoluted manner. Hence, even if one starts with the planar Euclidean distances, the new interpoint distances in the constructed instance cannot be embedded in the plane or even in any metric space. As the algorithms for planar bipartite (perfect) matching heavily exploit the properties of the plane, they cannot be employed for solving the new instance of bipartite perfect matching. Thus, even though there is a wealth of literature for planar bipartite matching, no progress has been made in understanding the structure of many-to-many matching in the plane. 

In our approach, we use a rather unconventional connection to bipartite perfect matching. First, we use a different reduction to convert our instance of many-to-many matching in the plane to an equivalent instance of bipartite perfect matching. The new instance cannot be embedded in the plane, however it does not modify the original distances between the points. Rather the new bipartite graph constructed is the union of (i) the original geometric bipartite graph, (ii) a new bipartite clique all of whose edges have the  same weight and (iii) a linear number of additional edges. Thus, even though we could not use a planar matching algorithm directly, we could successfully use ideas from the literature of planar matching exploiting the structure of the constructed graph. A similar reduction was used in \cite{eiter1997distance} in the context of computation of link distance of graphs. However, we cannot use this reduction directly, as we cannot afford to explicitly store the constructed graph which contains $\Omega(n^2)$ edges. Recall that we are aiming for an $O((1/\epsilon^c)\cdot n^{3/2}\cdot poly(\log n))$ time bound. Nevertheless, we exploit the structure of this graph to implicitly store it in  $O(n)$ space. 

In Section \ref{sec:exact}, we show that the Hungarian algorithm can be implemented on our constructed graph in $O(n^{2}\cdot poly(\log n))$ time using ideas from planar matching. To obtain a subquadratic time bound, we implement the algorithm due to Gabow and Tarjan \cite{gabow1989faster}. A straightforward implementation of this algorithm might need $\Omega(n^2)$ time. We note that this algorithm has been used in several works on planar matching for obtaining efficient algorithms \cite{RaghvendraA20,varadarajan1999approximation,AsathullaKLR20}. Our algorithm is closest to the one in \cite{varadarajan1999approximation} among these algorithms from a moral point of view. The difference is that their algorithm is for planar points. However, we deal with a graph which is partly embeddable. Nevertheless, in Section \ref{thm:ptas-perfect}, we show that using additional ideas and data structures, the Gabow-Tarjan algorithm can be implemented in $O(n^{3/2}\cdot poly(\log n))$ time, albeit with a $(1+\epsilon)$-factor loss on the quality of the solution. 
In Section \ref{sec:2approx}, we describe our 2-approximation which the $(1+\epsilon)$-approximation algorithm uses as a subroutine for preprocessing.    

\section{Preliminaries}

In the \emph{bipartite perfect matching} problem, we are given an edge-weighted bipartite graph $G=(R,B,E)$ containing a perfect matching, and the goal is to find a perfect matching having the minimum cost or sum of the edge-weights. For our convenience, sometimes we would assume that the edges of $G$ are not given explicitly. For example, $R$ and $B$ might be two sets of points in the plane, and $G$ is the complete bipartite graph induced by the bipartition $(R,B)$. In this case, the points in $R\cup B$ can be used to implicitly represent the graph $G$. In our case, we will use similar implicit representation of input graphs for designing subquadratic algorithm. Now, we have the following lemma, which reduces our problem to an equivalent instance of bipartite perfect matching.

\begin{lemma}\label{lem:equiv-many-perfect}
Given an instance $I'$ of minimum-cost many-to-many matching, one can compute in $O(n\log n)$ time an instance of bipartite perfect matching $I$ such that (i) if there is a many-to-many matching for $I'$ of cost $C$, there is a perfect matching for $I$ of cost at most $C$, and (ii) if there is a perfect matching for $I$ of cost $C$, there is a many-to-many matching for $I'$ of cost $C$. 
\end{lemma}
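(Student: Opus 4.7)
The plan is to construct the bipartite perfect-matching instance $I$ as an auxiliary graph $H=(L,R,E_H)$ on duplicated copies of $S\cup T$. I would set $L = S \cup T'$ and $R = T \cup S'$, where $S'=\{s' : s\in S\}$ and $T'=\{t' : t\in T\}$ are fresh disjoint duplicates of the input sets, so $|L|=|R|=n$. The edge set $E_H$ is the union of three layers: (a) the \emph{original} geometric edges $(s,t)$ for every $s\in S,\,t\in T$ with weight $d(s,t)$; (b) a \emph{bipartite clique} between $T'$ and $S'$ in which every edge $(t',s')$ has weight $0$; and (c) a \emph{bridge} layer of $|S|+|T|=n$ additional edges, namely $(s,s')$ with weight $\alpha_s := \min_{t\in T} d(s,t)$ for each $s\in S$ and $(t',t)$ with weight $\beta_t := \min_{s\in S} d(s,t)$ for each $t\in T$. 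Computing all the $\alpha_s$ and $\beta_t$ reduces to two bichromatic nearest-neighbor batches between $S$ and $T$ in the plane, which a Voronoi diagram together with point location handles in $O(n\log n)$ time; the rest of $H$ is stored implicitly.

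For direction (i), I would first reduce the given many-to-many matching $M'$ to a \emph{star forest}: while some edge has both endpoints of degree at least $2$, delete it; every vertex remains covered and the total cost does not increase, and the resulting graph is a disjoint union of stars. Let $C_S\subseteq S$ and $C_T\subseteq T$ be the star centers (every non-center vertex is a leaf of exactly one star, and every center has at least one leaf). For each star with center $c\in C_S$ and leaves $\ell_1,\dots,\ell_k\in T$, I would place into the candidate perfect matching the original edge $(c,\ell_1)$ of weight $d(c,\ell_1)$, the bridges $(\ell_i',\ell_i)$ for $i=2,\dots,k$ of weight $\beta_{\ell_i}$, and the clique edge $(\ell_1',c')$ of weight $0$; stars centered at $c\in C_T$ are handled symmetrically with $(s,s')$ bridges of weight $\alpha_{\ell_i}$ and a clique edge $(c',\ell_1')$. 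A routine bookkeeping check against the partition $L=C_S\sqcup(S\setminus C_S)\sqcup T'$ and $R=C_T\sqcup(T\setminus C_T)\sqcup S'$ shows that every left vertex and every right vertex is hit exactly once, so the result is a perfect matching; its cost is at most $\cost(M')\le C$ because $\beta_{\ell_i}\le d(c,\ell_i)$ and $\alpha_{\ell_i}\le d(c,\ell_i)$ by the definitions of the nearest-neighbor distances.

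For direction (ii), let $\pi$ be a perfect matching in $H$ of cost $C$, and let $A\subseteq S$ and $B\subseteq T$ be the endpoints of the original edges used by $\pi$. The bipartite structure forces every $s\in S\setminus A$ into its bridge $(s,s')$ of cost $\alpha_s$, every $t\in T\setminus B$ into its bridge $(t',t)$ of cost $\beta_t$, and the remaining $|A|=|B|$ copy vertices to be matched through the zero-weight clique. I would build $M'$ by taking all original edges used by $\pi$ and, for each uncovered original vertex, adding the cheapest incident edge: $(s,t^*_s)$ for $s\in S\setminus A$ with $t^*_s$ realizing $\alpha_s$, and $(s^*_t,t)$ for $t\in T\setminus B$ with $s^*_t$ realizing $\beta_t$. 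All of $S\cup T$ is then covered, and summing the three contributions gives $\cost(M')=\cost(\pi)=C$ exactly, since the clique edges contribute $0$.

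I expect the main obstacle to be the bookkeeping in direction (i): verifying that every $T'$-copy (including those of centers and of leaves) and every $S'$-copy is used in exactly one of the two kinds of star translations, with no collisions at the clique layer, and that the greedy star-forest reduction is valid in the bipartite setting. Once this accounting is pinned down, the cost inequality reduces immediately to the definitions of $\alpha_s$ and $\beta_t$, and the $O(n\log n)$ preprocessing via planar nearest-neighbor queries finishes the proof.
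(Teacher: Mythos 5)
Your construction is identical to the paper's (your $L=S\cup T'$, $R=T\cup S'$ with layers (a), (b), (c) is exactly the paper's $R=R_0\cup R_1$, $B=B_0\cup B_1$ with $E_0,E_1,E_2,E_3$), and both directions proceed exactly as in the paper: direction~(i) via the same star-forest reduction (deleting middle edges of length-3 paths) followed by routing one leaf through a zero-weight clique edge and the remaining leaves through bridges, and direction~(ii) by classifying each original vertex according to whether it is matched by an original edge or a bridge. This is the paper's proof, with only cosmetic differences in notation.
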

\vspace{-6mm}

\begin{figure}[!ht]
		\begin{center}
			\includegraphics[width=.7\textwidth]{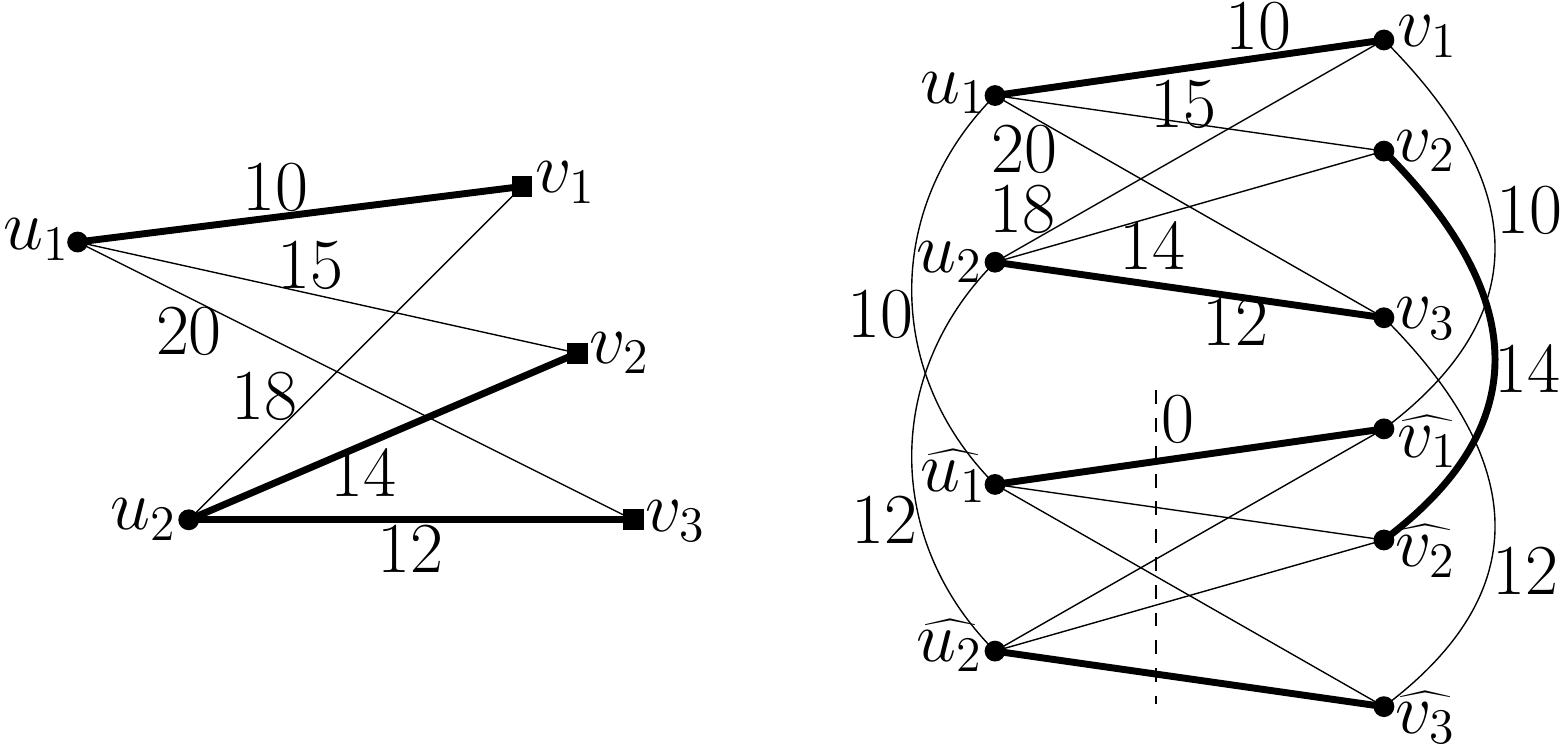}
			\vspace{-2mm}
			\caption{The figure on the left shows an instance $I'$ of minimum-cost many-to-many matching along with the interpoint distances, where $S=\{u_1,u_2\}$ and $T=\{v_1,v_2,v_3\}$. The right figure depicts the graph constructed from $I'$ along with the edge weights, where $R_0=\{u_1,u_2\}, R_1=\{\widehat{v_1},\widehat{v_2},\widehat{v_3}\}, B_0=\{v_1,v_2,v_3\}$ and $B_1=\{\widehat{u_1},\widehat{u_2}\}$. Solution pairs (or edges) are shown in bold.}
			\label{fig:graph}
		\end{center}
\end{figure}
\vspace{-10mm}
\begin{proof}
Given the instance $I'$ consisting of the two sets of points $S$ and $T$, we construct a bipartite graph $G=(R=R_0\cup R_1,B=B_0\cup B_1,E)$, where $R_0=S$, $B_0=T$, $R_1$ contains copies of the points in $T$, $B_1$ contains copies of the points in $S$. $E$ contains all the edges of $E_0=R_0\times B_0$ and $E_1=R_1\times B_1$, and also the ones in $E_2=\{(u,\hat{u})\mid u\in R_0,\hat{u}\text{ is the copy of } u \text{ in } B_1\}$ and $E_3=\{(\hat{v},v)\mid v\in B_0,\hat{v}\text{ is the copy of } v \text{ in } R_1\}$. The weight of each edge $(u,v)\in E_0$ is the distance between the points $u$ and $v$. The weight of each edge in $E_1$ is 0. The weight of any edge $(u,\hat{u})\in E_2$ is the distance between $u\in S$ and its closest neighbor in $T$. The weight of any edge $(\hat{v},v)\in E_3$ is the distance between $v\in T$ and its closest neighbor in $S$. See Figure \ref{fig:graph} for an example. Note that $G$ can be represented implicitly in $O(n)$ space, where $|S\cup T|=n$. Let $I$ be the constructed instance of bipartite perfect matching that consists of $G$. As the closest neighbors of $n$ points in the plane can be found in total $O(n\log n)$ time using Voronoi diagram (see Section  \ref{sec:2approx}), 
construction of $G$ takes $O(n\log n)$ time. 

Now, suppose $I'$ has a many-to-many matching $M'$. Consider the pairs in $M'$ as the edges of a graph $G'$ with vertices being the points in $S\cup T$. We will use the term pairs and edges in $G'$ interchangeably. First, note that wlog we can assume that $G'$ does not contain any path of length 3. Otherwise, we can remove the middle edge of such a path from $M'$, and $M'$ still remains a many-to-many matching. Thus, each component in $G'$ is a star. We compute a perfect matching in $G$ from $M'$ as follows. For each star in $G'$ having only one edge $(u,v)$ with $u\in S, v\in T$, add $(u,v)\in E_0$ to $M$. Also, add the edge $(\hat{v},\hat{u})\in E_1$ to $M$. Now, consider any star $H=\{(u,v_1),(u,v_2),\ldots,(u,v_t)\}$ in $G'$; wlog assume that $u\in S$. Add $(u,v_1)\in E_0$ to $M$. Also, add the edge $(\hat{v_1},\hat{u})\in E_1$ to $M$. For each $2\le i\le t$, add the edge $(\hat{v_i},v_i)\in E_3$ to $M$. It is not hard to verify that all the vertices of $G$ are matched in $M$. Also, as the weight of a star edge $(u,v_i)$ above is at least the weight of $(\hat{v_i},v_i)$ in $G$ by definition, the cost of $M$ is at most the cost of $M'$. 

Next, suppose $I$ has a perfect matching $M$; we construct a many-to-many matching $M'$ for $I'$. Consider any $u\in S=R_0$. If $(u,\hat{u})\in M$, add the pair $(u,u')$ to $M'$, where $u'$ is the closest neighbor of $u$ in $T$. Otherwise, $u$ is matched (in $M$) to some $v_1\in B_0$. In this case, simply add the pair $(u,v_1)$ to $M'$. Similarly, consider any $v\in T=B_0$. If $(\hat{v},v)\in M$, add the pair $(v',v)$ to $M'$, where $v'$ is the closest neighbor of $v$ in $S$. Otherwise, $v$ is matched (in $M$) to some $u_1\in R_0$. In this case, simply add the pair $(u_1,v)$ to $M'$. It is not hard to verify that $M'$ is a many-to-many matching, and the cost of $M'$ is same as the cost of $M$. 
\end{proof}

Now, consider any matching in a graph. An \emph{alternating} path is a path whose edges alternate between matched and unmatched edges. Similarly, one can define \emph{alternating} cycles and trees. A vertex is called \emph{free} if it is not matched. An \emph{augmenting} path is an alternating path which starts and ends at free vertices. Given an augmenting path $P$ w.r.t. a matching $M$, we can augment $M$ by one edge if we remove the edges of $P\cap M$ from $M$ and add the edges in $P\setminus M$ to $M$. The new matching is denoted by $M \oplus P$. Throughout the paper, $m$ and $n$ denote the number of edges and vertices, respectively, unless otherwise specified. We denote the weight or cost of an edge $(u,v)$ by $c(u,v)$. In our discussions, a path can be treated as an ordered set of vertices or edges depending on the context.

%%%%%%%%%%%%%%%%%%%%%%%%%%%%%%%%%%%%%%%%%%%%%%%%%%%%%%%%%%%%%%%%%%%%%%%%%%%%%%%%%%%%%%%%%%%%%%%%%%%%%%%%%%%%%%%%%%%%%%%%%%%%

\section{An Exact Algorithm}
\label{sec:exact}

Consider the instance $I$ obtained by the reduction in Lemma \ref{lem:equiv-many-perfect}. In this section, we prove the following theorem. 

\begin{theorem}\label{thm:exact-perfect}
Bipartite perfect matching can be solved exactly on $I$ in time $O(n^{2}\cdot poly(\log n))$, and hence there is an $O(n^{2}\cdot poly(\log n))$ time exact algorithm for minimum-cost many-to-many matching. 
\end{theorem}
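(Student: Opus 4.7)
The plan is to execute the Hungarian (primal-dual) algorithm on the graph $G=(R,B,E)$ produced by Lemma~\ref{lem:equiv-many-perfect}, exploiting the three-part structure $E=E_0\cup E_1\cup(E_2\cup E_3)$ so that each of the $n$ augmentation phases runs in $O(n\cdot\poly(\log n))$ time rather than in $O(n^2)$. Since $|E|=\Theta(n^2)$, we keep $G$ only implicitly: we store the two planar point sets underlying $R_0$ and $B_0$, the current potentials $y_{\cdot}$ at every vertex, and the $O(n)$ identity edges of $E_2\cup E_3$ explicitly. The $n$ phases suffice because each phase augments the matching by one edge, and there are $n$ vertices to match on each side.

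Each phase reduces to a single-source shortest-path computation in the nonnegative reduced-cost graph, where edge $(u,v)$ has length $c(u,v)-y_u-y_v$, followed by augmentation along the min-cost alternating path from a free red vertex to a free blue vertex, and a standard dual update. The bottleneck is the Dijkstra sweep: when a vertex $u$ becomes settled with label $\ell(u)$, we need, for each category of edge leaving $u$, the unsettled neighbor $v$ minimizing $\ell(u)+c(u,v)-y_u-y_v$, so we never explicitly relax all $\Theta(n)$ outgoing edges of a given vertex.

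For an $E_0$-edge from $u\in R_0$ to $v\in B_0$ the quantity to minimize equals $(\ell(u)-y_u)+(c(u,v)-y_v)$, which is an additively weighted Euclidean nearest-neighbor query over the unsettled points of $B_0$ with additive weight $-y_v$. We handle it with a dynamic additively-weighted Voronoi-style data structure in the spirit of the planar bipartite matching algorithms of Vaidya~\cite{vaidya1989geometry} and Varadarajan~\cite{varadarajan1999approximation}, supporting queries and deletions in $O(\poly(\log n))$ amortized time; a symmetric structure handles queries emanating from $B_0$. For $E_1$-edges the weight is $0$, so the reduced cost collapses to $\ell(u)-y_{\hat u}-y_{\hat v}$, and the best neighbor is simply the unsettled vertex of maximum potential in the opposite copy set; two standard max-heaps over $R_1$ and $B_1$ suffice. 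The $O(n)$ edges of $E_2\cup E_3$ are relaxed explicitly in $O(n)$ time per phase. Dual updates modify the potentials in the frontier set by a common additive constant, which for the Voronoi structure is a uniform additive shift of its weights and therefore leaves its amortized guarantees intact.

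The main obstacle is the $E_0$ piece: we need to port ideas from planar bipartite matching into a graph that is only partially geometric, and verify that the dynamic additively-weighted nearest-neighbor structure can be (i) initialized in $O(n\,\poly(\log n))$ time, (ii) queried $O(n)$ times per phase, and (iii) updated under deletions and uniform weight shifts without losing its polylogarithmic amortized cost. Assuming these data-structural ingredients, each phase costs $O(n\cdot\poly(\log n))$ and the $n$ phases total $O(n^2\cdot\poly(\log n))$. Combined with Lemma~\ref{lem:equiv-many-perfect}, which in $O(n\log n)$ time turns the original many-to-many matching instance into such an $I$ while preserving optimal cost, this yields the claimed $O(n^2\cdot\poly(\log n))$ exact algorithm for minimum-cost many-to-many matching in the plane.
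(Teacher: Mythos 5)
Your proposal follows essentially the same strategy as the paper: reduce to perfect matching via Lemma~\ref{lem:equiv-many-perfect}, run the Hungarian algorithm for $n$ phases, and bring each phase down to $O(n\cdot\poly(\log n))$ by splitting the work across the three edge categories ($E_0$ geometric, $E_1$ zero-weight clique handled by two max-heaps over potentials, $E_2\cup E_3$ the $O(n)$ identity edges handled explicitly). The paper likewise frames each Hungarian search as maintaining a global bichromatic closest pair over the union of three sub-structures, and uses the $\sigma/\Delta$ offset trick (the weight of a vertex is frozen the moment it enters the forest, and the uniform $\delta$-shift is absorbed into a scalar $\Delta$) so that dual adjustments never touch the geometric data structure at all.

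The one point worth flagging is your treatment of the $E_0$ piece. You invoke a ``dynamic additively-weighted Voronoi-style data structure in the spirit of Vaidya and Varadarajan, supporting queries and deletions in $O(\poly(\log n))$ amortized time.'' Neither of those works actually supplies such a structure: Vaidya's bichromatic-closest-pair maintenance has $\Theta(\sqrt{n})$-type per-operation cost (giving $O(n^{2.5})$ overall), and Varadarajan's paper is an approximation scheme. The ingredient that makes the $O(n^2\cdot\poly(\log n))$ bound go through is the recent dynamic bichromatic closest pair structure of Kaplan, Mulzer, Roditty, Seiferth and Sharir~\cite{kaplan2020dynamic}, which handles additively-weighted Euclidean distances with genuinely polylogarithmic amortized insert/delete/query. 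This is exactly what the paper plugs in for $\mathcal{D}_1$. A smaller quibble: the ``uniform additive shift of the Voronoi weights'' you describe does not actually occur --- during a dual adjustment the potentials of $B'$ (the points still stored in the structure) stay fixed; it is the $R'$ and in-forest $B$ potentials that shift, which is precisely why the $\sigma/\Delta$ bookkeeping makes the structure static between insert/delete events. With the correct data structure cited and this bookkeeping made explicit, your argument is complete.
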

% The following corollary follows from the above theorem and Lemma \ref{lem:equiv-many-perfect}. 

% \begin{corollary}
% There is an $O(n^{2}\cdot poly(\log n))$ time exact algorithm for minimum-cost many-to-many matching. 
% \end{corollary}

In the rest of this section, we prove Theorem \ref{thm:exact-perfect}. To prove this theorem we use the Hungarian algorithm \cite{kuhn1956variants} for computing a (minimum-cost) bipartite perfect matching. 

In the Hungarian algorithm, there is a dual variable $y(v)$ corresponding to each vertex $v$. Every feasible matching $M$ must satisfy the following two conditions. 
\begin{align}
    y(u)+y(v) & \le c(u,v) \text{ for every edge } (u,v)\label{eq:1}\\
    y(u)+y(v) &= c(u,v) \text{ for every edge } (u,v) \in M\label{eq:2}
\end{align}
Given any matching $M$ and an augmenting path $P$, the net-cost or augmentation cost is defined as, 
\begin{align*}
    \phi(P) = \sum_{(u,v) \in P\setminus M} c(u,v)-\sum_{(u,v) \in P\cap M} c(u,v)
\end{align*}
$\phi(P)$ is basically the cost increment for augmenting $M$ along $P$. The net-cost of any alternating cycle can be defined in the same way. The Hungarian algorithm starts with an empty matching $M$. In every iteration, it computes an augmenting path of the minimum net-cost and augments the current matching. The algorithm halts once a perfect matching is found. If there is a perfect matching in the graph, this algorithm returns one after at most $n$ iterations. Moreover, an augmenting path can be found in $O(m)$ time leading to $O(mn)$ running time in total. 

It is possible to show that any perfect matching is a minimum-cost matching if and only if there is no negative net-cost alternating cycle with respect to it. Moreover, a feasible matching with dual values $\{y(v)\}$ satisfies this property. Thus, it is sufficient to find a perfect matching that is feasible. 

For finding an augmenting path, a Hungarian search procedure is employed. Hungarian search uses Dijkstra's shortest path algorithm to find a minimum net-cost path in the graph where the value $y(u)+y(v)$ is subtracted from the weight of each edge $(u,v)$. This along with the first feasibility condition ensure that each edge has a non-negative weight, and hence there is no negative cycle in the graph. So, one can correctly employ Dijkstra's algorithm to find such a shortest path. Finally, one can show that the minimum net-cost augmenting path with original weights corresponds to a shortest path with the modified weights, and vice versa. After augmenting the current matching with the newly found path, the dual values are adjusted appropriately to ensure feasibility of the new matching. 

Now, we describe in detail how the Hungarian search procedure is implemented in each iteration. An edge $(u,v)$ is called \emph{admissible} if $y(u)+y(v)=c(u,v)$. It can be shown that it suffices to find an augmenting path consisting of only admissible  edges. Let $M$ be the current matching and $F$ be the free vertices of $R$. To obtain the desired augmenting path, a forest $\mathcal{F}$ is grown whose roots are in $F$. Each tree in $\mathcal{F}$ is an augmenting tree rooted at a vertex in $F$. Once $\mathcal{F}$ contains an augmenting path the search is completed. At any moment, let $R'$ be the vertices of $R$ in $\mathcal{F}$ and $B'$ be the vertices of $B$ not in $\mathcal{F}$. Initially, $R'=F$ and $B'=B$. Also, let \[\delta=\min_{u\in R',v\in B'} \{c(u,v)-y(u)-y(v)\}.\] Note that $\delta=0$ means there is an admissible edge. In each step, if $\delta=0$, an admissible edge $(u,v)$ is selected where $u\in R'$ and $v\in B'$. If $v$ is free, $(u,v)$ is added to $\mathcal{F}$ and the desired augmenting path is found. Otherwise, let $v$ be matched to $u'$ (which is not in $\mathcal{F}$ by an invariant). In this case, the edges $(u,v)$ and $(v,u')$ are added to $\mathcal{F}$; $u'$ is added to $R'$ and $v$ is removed from $B'$. 

If at some moment, $\delta$ becomes more than 0 (no admissible edge), we perform dual adjustments. In particular, for each $u\in R'$, $y(u)$ is updated to $y(u)+\delta$ and for each $v\in \mathcal{F}\cap B$, $y(v)$ is updated to $y(v)-\delta$. This ensures that at least one edge becomes admissible, e.g., the edge corresponding to which the $\delta$ value is achieved. Thus, eventually the search halts with an augmenting path in $\mathcal{F}$. 

It can be shown  that if $\delta$ can be computed efficiently, then the desired augmenting path can also be found efficiently \cite{gabow1989faster,AgarwalES99,vaidya1989geometry}. For that purpose, another variable $\Delta$ is maintained. Also, for each vertex $v$, a weight $\sigma_v$ is stored. In the beginning of each step, $\Delta=0$, $\sigma_v=y(v)$. When the edges $(u,v)$ and $(v,u')$ are added to $\mathcal{F}$, $\sigma_{u'}$ is updated to $y(u')-\Delta$ and $\sigma_{v}$ is updated to $y(v)+\Delta$. Once $\delta$ becomes more than 0, $\Delta$ is updated to $\Delta+\delta$. 

Note that the weight of a vertex is updated only once when it is added to $\mathcal{F}$. We have the following observation. 

\begin{observation}
\cite{gabow1989faster,AgarwalES99,vaidya1989geometry} For each $u\in R'$, the current dual value of $u$, $y(u)$, is equal to $\sigma_u+\Delta$. For each $v\in \mathcal{F}\cap B$, the current dual value of $v$, $y(v)$, is equal to $\sigma_v-\Delta$. 
\end{observation}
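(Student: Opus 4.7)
The plan is to prove the invariant by induction on the sequence of events occurring during a single execution of the Hungarian search procedure. The events are of two kinds. A type-(a) event occurs when an admissible edge $(u,v)$ with $u\in R'$, $v\in B'$ is processed: if $v$ is matched to some $u'$, then $v$ is added to $\mathcal{F}\cap B$ and $u'$ is added to $R'$, and $\sigma_v$, $\sigma_{u'}$ are (re)assigned at that moment. A type-(b) event occurs when no admissible edge exists, so $\delta>0$ is computed, $\Delta$ is increased by $\delta$, and conceptually $y(u)$ increases by $\delta$ for every $u\in R'$ while $y(v)$ decreases by $\delta$ for every $v\in \mathcal{F}\cap B$. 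The $\sigma$ values are not touched by type-(b) events; indeed, this is precisely the purpose of the $(\Delta,\sigma)$ machinery, since explicitly updating the duals of every vertex currently in $R'\cup(\mathcal{F}\cap B)$ at each dual adjustment would be too expensive to afford.

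For the base case, at the start of the search iteration we have $\Delta=0$, $R'=F$, $\mathcal{F}\cap B=\emptyset$, and $\sigma_v$ has been initialized to $y(v)$. Thus for every $u\in R'$ we have $y(u)=\sigma_u=\sigma_u+\Delta$, and the second half of the claim is vacuous. For the inductive step, consider a type-(a) event. The vertex $u$ through which the admissible edge was selected is already in $R'$, so the invariant for $u$ holds by the induction hypothesis. When $u'$ is added to $R'$, the assignment $\sigma_{u'}\leftarrow y(u')-\Delta$ gives $\sigma_{u'}+\Delta=y(u')$, and when $v$ is added to $\mathcal{F}\cap B$, the assignment $\sigma_v\leftarrow y(v)+\Delta$ gives $\sigma_v-\Delta=y(v)$. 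No other $y$, $\sigma$, or $\Delta$ value changes, so the invariant continues to hold for all remaining vertices. For a type-(b) event, the $\sigma$ values are unchanged and $\Delta$ increases by $\delta$. For every $u\in R'$, the updated dual is the old $y(u)+\delta$, which by the induction hypothesis equals $\sigma_u+\Delta_{\text{old}}+\delta=\sigma_u+\Delta_{\text{new}}$; symmetrically, the updated $y(v)$ for $v\in\mathcal{F}\cap B$ equals $\sigma_v-\Delta_{\text{new}}$. This closes the induction.

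There is no real obstacle here; the statement is a routine bookkeeping invariant. Its importance is that, armed with this identity, the quantity $\delta=\min_{u\in R',v\in B'}\{c(u,v)-y(u)-y(v)\}$ can be evaluated without ever touching the stored duals of the vertices already in the forest: one uses $\sigma_u+\Delta$ in place of $y(u)$ for $u\in R'$, while $y(v)$ for $v\in B'$ has not been modified since $v$ has not yet entered $\mathcal{F}$. This will be the identity repeatedly invoked when the Hungarian and Gabow--Tarjan frameworks are ported to the implicitly represented graph $G$ in the remainder of Section~\ref{sec:exact}.
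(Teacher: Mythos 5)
Your argument is correct. The paper itself states this observation without proof, attributing it to \cite{gabow1989faster,AgarwalES99,vaidya1989geometry}, and your two-event induction (installing the invariant via $\sigma_{u'}\leftarrow y(u')-\Delta$ and $\sigma_v\leftarrow y(v)+\Delta$ at the moment a vertex joins the forest, and absorbing every dual increment $\delta$ into $\Delta$ while leaving the stored $\sigma$-values fixed) is exactly the standard bookkeeping justification those sources give. Your closing remark also correctly pins down why the invariant matters: for $v\in B'$ the stored $y(v)$ has not yet been conceptually modified, which is what licenses the rewriting $\delta=\min_{u\in R',v\in B'}\{c(u,v)-\sigma_u-\sigma_v\}-\Delta$ used in the rest of Section~\ref{sec:exact}.
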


It follows from the above observation that $\delta$ can be equivalently expressed as follows. \[\delta=\min_{u\in R',v\in B'} \{c(u,v)-\sigma_u-\sigma_v\}-\Delta. \]

Hence, Hungarian search boils down to the following task ignoring the trivial details. We need to maintain two sets $R'\subseteq R$ and $B'\subseteq B$. Initially, $B'=B$. In each step, a vertex $r$ is added to $R'$ and a vertex $b$ is removed from $B'$. Additionally, each vertex $v$ has a weight $\sigma_v$. In every step, the goal is to maintain the bichromatic closest pair, which is the pair $(r,b)\in R'\times B'$ with the minimum $c(r,b)-\sigma_r-\sigma_b$ value. In the following, we construct a data structure that can be used to perform the above task (a Hungarian search) in $O(n\cdot poly(\log n))$ time. As we need at most $n$ such searches, Theorem \ref{thm:exact-perfect} follows. 

Recall that in our instance $I$, we are given the graph $G=(R=R_0\cup R_1,B=B_0\cup B_1,E)$, where $R_1$ contains copies of the vertices in $B_0$, $B_1$ contains copies of the vertices in $R_0$. $E=E_0\cup E_1\cup E_2\cup E_3$, where $E_0=R_0\times B_0$, $E_1=R_1\times B_1$, $E_2=\{(u,\hat{u})\mid u\in R_0,\hat{u}\in B_1\}$, and $E_3=\{(\hat{v},v)\mid v\in B_0,\hat{v}\in R_1\}$. Let $n=|R|=|B|$. 

Our data structure $\mathcal{D}$ is a collection of three data structures. First, we construct the dynamic bichromatic closest pair data structure from \cite{kaplan2020dynamic} for the two point sets $R_0$ and $B_0$ with the distance function $c(r,b)-\sigma_r-\sigma_b$ for each pair $(r,b)$. We refer to this data structure as $\mathcal{D}_1$. Initially, it contains only the points in $R'\cup B'$. Next, we construct two max-heaps $H_1^r$ and $H_1^b$ for the vertices in $R_1\cap R'$ and $B_1\cap B'$, respectively. Initially, $H_1^r$ contains vertices in $R'$ and $H_1^b$ contains all the vertices in $B_1$. The key value of each vertex $v$ is its weight $\sigma_v$. Using $H_1^r$ and $H_1^b$, the pair $(u,v)\in (R_1\cap R')\times (B_1\cap B')$ with the maximum $\sigma_u+\sigma_v$ value can be found in $O(1)$ time. %Note that this pair indeed minimizes the $c(r,b)-\sigma_r-\sigma_b$ value, as the weights of the edges in $E_1$ are same. 
We also construct another min-heap $H_{23}$ to store the edges in $E_2\cup E_3$ with key value $c(r,b)-\sigma_r-\sigma_b$ for each pair $(r,b)$. Initially, it is empty. In every iteration, it contains only those edges $(u,v)$ such that $u\in R'$ and $v\in B'$. We also maintain a global closest pair $(r,b)$ over the three data structures $\mathcal{D}_1$, $H_1^r\cup H_1^b$ and $H_{23}$ with the minimum key value $c(r,b)-\sigma_r-\sigma_b$. 

Using $\mathcal{D}$, we implement each step as follows. If $v\in B_0$, remove $v$ from $\mathcal{D}_1$. Also remove $(\hat{v},v)$ from $H_{23}$ if it is in $H_{23}$. If $v\in B_1$, remove $v$ from $H_1^b$. Also remove the edge $(u',v)$ with $u'\in R_0$ from $H_{23}$ if its in $H_{23}$. If $u\in R_0$, add $u$ to $\mathcal{D}_1$. Also add the edge $(u,\hat{u})$ to $H_{23}$ if $\hat{u}\in B'$. If $u\in R_1$, add $u$ to $H_1^r$. Also add the edge $(u,v')$ with $v'\in B_0$ to $H_{23}$ if $v'\in B'$. 

It is not hard to verify that at each step the correct global closest pair is stored in $\mathcal{D}$. The correctness for  $\mathcal{D}_1$ and $H_{23}$ follow trivially. Also, as the weight of the edges in $E_1$ are same, it is sufficient to find a pair $(u,v)\in E_1$ for which $\sigma_u+\sigma_v$ value is maximized. As $E_1$ contains all the edges in $R_1\times B_1$, equivalently it suffices to find a $u\in R_1$ with the maximum $\sigma_u$ value and a $v\in B_1$ with the maximum $\sigma_v$ value. 

We note that the total number of steps in a Hungarian search is at most $n$. Also, $O(n)$ operations (insertions, deletions and searching) on $\mathcal{D}_1$ can be performed in amortized $O(n\cdot poly(\log n))$ time \cite{kaplan2020dynamic}. The construction time and space of $\mathcal{D}_1$ are also $O(n\cdot poly(\log n))$. Moreover, the $O(n)$ operations on all max-heaps can be performed in total $O(n\log n)$ time. It follows that Hungarian search can be implemented in $O(n\cdot poly(\log n))$ time leading to the desired running time of our algorithm.

\section{A 2-approximation in $O(n\log n)$ Time}
\label{sec:2approx}

In this section, we prove the following theorem.

\begin{theorem}\label{thm:2approx}
There is an $O(n\log n)$ time 2-approximation algorithm for minimum-cost many-to-many matching.  
\end{theorem}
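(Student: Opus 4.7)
The plan is to output a simple matching built from bichromatic nearest neighbors. For each $p\in S$ let $f(p)\in T$ denote a point of $T$ closest to $p$, and symmetrically for each $q\in T$ let $g(q)\in S$ denote a point of $S$ closest to $q$. Return
\[
M \;=\; \{(p,f(p)) : p\in S\} \;\cup\; \{(g(q),q) : q\in T\}.
\]
By construction every point of $S\cup T$ appears in some pair of $M$, so $M$ is a valid many-to-many matching.

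For the approximation ratio, fix an optimum many-to-many matching $M^*$ of cost $\OPT$. Since each $p\in S$ is covered by $M^*$, pick one pair $(p,\pi(p))\in M^*$ per $p\in S$; the pairs obtained are distinct (their $S$-endpoints differ), hence $\sum_{p\in S} d(p,\pi(p)) \le \OPT$. Because $f(p)$ is a nearest point of $T$ to $p$, we have $d(p,f(p))\le d(p,\pi(p))$, so $\sum_{p\in S} d(p,f(p)) \le \OPT$. A symmetric argument with roles of $S$ and $T$ swapped gives $\sum_{q\in T} d(g(q),q) \le \OPT$. Adding the two inequalities and using $\cost(M)\le \sum_{p\in S}d(p,f(p))+\sum_{q\in T}d(g(q),q)$ (duplicates within $M$ only help) yields $\cost(M) \le 2\,\OPT$.

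For the running time, we need the closest point of $T$ for every $p\in S$ and the closest point of $S$ for every $q\in T$. Compute the Voronoi diagram of $T$ together with a planar point-location structure in $O(n\log n)$ time, then locate each $p\in S$ in $O(\log n)$ time to read off $f(p)$; do the symmetric computation for $g$. Both Voronoi constructions and all $O(n)$ queries take $O(n\log n)$ total time, and the final matching has at most $|S|+|T|=n$ pairs and can be written out in $O(n)$ time.

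There is no real obstacle: the only subtlety is making sure the charging $\sum_{p\in S} d(p,\pi(p))\le \OPT$ is legitimate, which is immediate because each edge of $M^*$ has a unique $S$-endpoint so the chosen pairs $(p,\pi(p))$ are pairwise distinct elements of $M^*$; the same holds on the $T$-side, giving the factor $2$ overall.
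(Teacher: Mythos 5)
Your proposal is correct and matches the paper's proof essentially verbatim: the same nearest-neighbor matching, the same charging of each point's nearest-neighbor distance against the cost of an optimal pair covering it (counted once per side, giving the factor $2$), and the same Voronoi-diagram-plus-point-location implementation in $O(n\log n)$ time. No differences worth noting.
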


Our algorithm is as follows. For each point $u\in S$, we add the pair $(u,q)$ to the solution matching $M$ which minimizes $d(u,q)$ over all $q\in T$. Similarly, for each point $v\in T$, we add the pair $(p,v)$ to $M$ which minimizes $d(p,v)$ over all $p\in S$. Thus, for each point in a set, we basically add the pair corresponding to its nearest neighbor in the other set. By definition, the computed solution $M$ is a many-to-many matching. Next, we argue how to implement this algorithm in $O(n\log n)$ time. First, we compute the Voronoi diagram of the points in $T$, compute a triangulation of the Voronoi cells and then construct Kirkpatrick's planar point location data structure \cite{Kirkpatrick83} using the triangulation. Using this data structure, for any $u\in S$, its nearest neighbor in $T$ can be computed in $O(\log n)$ time. Also, the data structure uses $O(n)$ space and $O(n\log n)$ construction time including the preprocessing time. Using a similar data structure for points in $S$, we can compute the nearest neighbors of the points in $T$ in $O(n\log n)$ time. The following lemma completes the proof of Theorem \ref{thm:2approx}.     

\begin{lemma}
The cost of the matching $M$ computed as in above is at most twice the optimal cost. 
\end{lemma}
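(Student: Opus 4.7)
The plan is a direct charging argument: bound the contribution of each point to the cost of $M$ by the cost of a single pair in an optimal solution that must cover that point.

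Let $\OPT$ denote a minimum-cost many-to-many matching, and let $C^* = \cost(\OPT)$. I would split the cost of $M$ into two sums, one indexed by $S$ and one indexed by $T$, namely
\[
\cost(M) \le \sum_{u \in S} d(u, q_u^*) + \sum_{v \in T} d(p_v^*, v),
\]
where $q_u^* \in T$ is the nearest neighbor of $u$ in $T$ and $p_v^* \in S$ is the nearest neighbor of $v$ in $S$. The inequality (rather than equality) accounts for the fact that a pair can be added to $M$ from both sides and is only counted once in $M$ viewed as a set.

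Now for the charging. Since $\OPT$ is a many-to-many matching, for every $u \in S$ there is at least one pair in $\OPT$ of the form $(u, q)$ with $q \in T$; fix one such pair and call its partner $q_u$. By the defining property of the nearest neighbor, $d(u, q_u^*) \le d(u, q_u)$. Summing over $u \in S$ gives
\[
\sum_{u \in S} d(u, q_u^*) \;\le\; \sum_{u \in S} d(u, q_u) \;\le\; C^*,
\]
where the last inequality holds because the chosen pairs $(u, q_u)$ are all distinct (they use distinct left endpoints) and each one is a pair of $\OPT$. An entirely symmetric argument applied to $T$ yields $\sum_{v \in T} d(p_v^*, v) \le C^*$. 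Adding the two bounds gives $\cost(M) \le 2 C^*$, as required.

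There is no real obstacle here; the only thing to be careful about is the double-counting issue in the definition of $\cost(M)$ (ensuring that $M$ is treated as a set of pairs, so a pair that is ``nearest'' from both sides is counted once, which only helps us), and verifying that the pairs $(u, q_u)$ chosen in the first sum are distinct so that the total is genuinely bounded by $\cost(\OPT)$ rather than by something larger.
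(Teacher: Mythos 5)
Your proof is correct and follows essentially the same charging argument as the paper: each point's nearest-neighbor distance is bounded by the cost of an optimal pair covering that point, and the two sums over $S$ and $T$ each contribute at most $\cost(M^*)$. You are in fact slightly more careful than the paper, which writes $\cost(M)=\sum_{p\in S\cup T}d(p,\eta(p))$ with equality rather than the inequality accounting for pairs added from both sides.
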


\begin{proof}
For each point $p$ in $S$ (resp. in $T$), let $\eta(p)$ be its nearest neighbor in $T$ (resp. in $S$). Note that the cost of $M$, cost$(M)$ $=\sum_{p\in S\cup T} d(p,\eta(p))$. Now, consider any optimal matching $M^*$. For each $p \in S\cup T$, let $c^*(p)$ be the cost of any pair $(p,q^*)$ in $M^*$, i.e., a pair containing $p$. Then, $c^*(p)$ must be at least the distance between $p$ and its nearest neighbor in the other set, i.e.,  $c^*(p)\ge d(p,\eta(p))$. It follows that, the cost of $M$,  \[\sum_{p\in S\cup T} d(p,\eta(p))\le \sum_{p\in S\cup T} c^*(p)=\sum_{p_1\in S} c^*(p_1)+\sum_{p_2\in T} c^*(p_2)\le \text{cost}(M^*)+\text{cost}(M^*)=2\cdot \text{cost}(M^*).\] 
\end{proof}

\section{An Improved $(1+\epsilon)$-approximation}
Consider the instance $I$ obtained by the reduction in Lemma \ref{lem:equiv-many-perfect}. In this section, we prove the following theorem. 

\begin{theorem}\label{thm:ptas-perfect}
A $(1+\epsilon)$-approximate bipartite perfect matching for $I$ can be computed in time $O((1/\epsilon^c)\cdot n^{3/2}\cdot poly(\log n))$ for some constant $c$, and hence there is an $O((1/\epsilon^c)\cdot n^{3/2}\cdot poly(\log n))$ time $(1+\epsilon)$-approximation algorithm for minimum-cost many-to-many matching. 
\end{theorem}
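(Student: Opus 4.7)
The plan is to replace the Hungarian algorithm used in Section~\ref{sec:exact} by a scaling-based algorithm in the style of Gabow and Tarjan~\cite{gabow1989faster}, which reduces the number of search phases from $n$ to $O(\sqrt{n})$, and then to re-implement each phase efficiently on the implicit representation of $G$. The first step is preprocessing: invoke the $2$-approximation from Theorem~\ref{thm:2approx} to compute a value $C_0$ with $\OPT \le C_0 \le 2\cdot \OPT$. Scale and round all edge weights in $G$ (namely the Euclidean distances on $E_0$, the constant $0$ weight on $E_1$, and the nearest-neighbor weights on $E_2 \cup E_3$) to integers in the range $[0, n/\epsilon]$; a standard analysis shows that an exact optimum on the rounded instance gives a $(1+\epsilon)$-approximate optimum on the original instance, with an additional $O(\log(n/\epsilon))$ scaling-phase factor absorbed into $\poly(\log n) \cdot (1/\epsilon)^c$.

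The second step is to execute the Gabow--Tarjan scaling algorithm. The algorithm proceeds in $O(\sqrt{n})$ phases, and within each phase it repeatedly performs a Hungarian-style search that simultaneously grows augmenting trees from all currently free vertices in $R$, computes a maximal set of vertex-disjoint minimum-net-cost augmenting paths, and augments along all of them before recomputing duals. Correctness and approximation follow from the standard analysis of Gabow--Tarjan together with the rounding argument, so the only issue is implementation speed: we must perform each phase in $O(n \cdot \poly(\log n))$ time, and we cannot afford to touch the $\Theta(n^2)$ edges of $E_0 \cup E_1$ explicitly.

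The third step is the data-structure implementation, which is where the novelty lies and which I expect to be the main obstacle. The Hungarian-search subroutine from Section~\ref{sec:exact} is extended to the multi-source/multi-sink setting required by Gabow--Tarjan: maintain the dynamic bichromatic closest-pair structure $\mathcal{D}_1$ of~\cite{kaplan2020dynamic} on the planar pairs $(R_0 \cap R', B_0 \cap B')$ under the modified distances $c(r,b)-\sigma_r-\sigma_b$; maintain max-heaps $H_1^r, H_1^b$ on the clique vertices in $R_1 \cap R'$ and $B_1 \cap B'$ keyed by $\sigma$, which suffice because all edges of $E_1$ have equal weight so the extremal pair in $E_1$ is determined by the two largest $\sigma$ values; and maintain a min-heap $H_{23}$ on the currently active edges of $E_2 \cup E_3$, whose total number of insertions across a phase is $O(n)$. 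Each step of a search (selecting the globally smallest-slack candidate across the three structures, performing a dual adjustment, and moving a vertex between $R', B'$ and the growing forest) then costs $\poly(\log n)$ amortized. The delicate point is handling a whole phase of Gabow--Tarjan rather than a single augmentation: once a tentative augmenting path is discovered, its vertices must be frozen so that later searches in the same phase avoid them, which I realize by lazy deletion from $\mathcal{D}_1$, $H_1^r$, $H_1^b$, and $H_{23}$, together with a global $\Delta$-offset bookkeeping identical to the one in Section~\ref{sec:exact} so that individual dual updates are never written out explicitly.

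Putting the pieces together gives $\poly(\log n)$ amortized cost per elementary operation, $O(n)$ operations per phase, and $O(\sqrt{n})$ phases per scaling level, multiplied by the $O(\log(n/\epsilon))$ scaling levels from the weight rounding. This yields the claimed $O((1/\epsilon^c) \cdot n^{3/2} \cdot \poly(\log n))$ bound for bipartite perfect matching on $I$, and invoking Lemma~\ref{lem:equiv-many-perfect} transfers this bound to minimum-cost many-to-many matching, completing Theorem~\ref{thm:ptas-perfect}. The principal difficulty I anticipate is verifying that the lazy-deletion and $\Delta$-offset interactions remain correct across all three structures when a Gabow--Tarjan phase interleaves many simultaneous searches, since an inconsistency between the clique-heap keys and the $\mathcal{D}_1$ slack function could corrupt the global closest-pair answer.
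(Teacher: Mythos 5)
Your high-level framework is correct: use Theorem~\ref{thm:2approx} to bound the aspect ratio, run Gabow--Tarjan, and implement each phase in $O(n\cdot\poly(\log n))$ time. But the implementation plan has a real gap, and you have actually located it yourself in your last sentence: the data structures you propose (the dynamic bichromatic closest-pair structure $\mathcal{D}_1$ from~\cite{kaplan2020dynamic}, plus the heaps $H_1^r$, $H_1^b$, $H_{23}$, as in Section~\ref{sec:exact}) do not suffice for the Gabow--Tarjan \emph{match} procedure. The reason is not the interaction of lazy deletion with $\Delta$-offsets; it is that Step~1 of \emph{match} is \emph{not} a Hungarian search. Step~1 must run a DFS that, from a fixed current endpoint $u\in R$, enumerates eligible edges $(u,v)$ satisfying $y(u)+y(v)=c(u,v)+1$. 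This is a vertex-incident enumeration query with an \emph{equality} constraint, not a global closest-pair query; a bichromatic closest-pair oracle returns a single global minimizer of $c(r,b)-\sigma_r-\sigma_b$ over $R'\times B'$ and has no way to efficiently return ``all unmarked $v$ forming an eligible edge with this specific $u$.'' Without this capability, Step~1 costs $\Omega(n)$ per vertex extension and the whole phase degrades to $\Omega(n^2)$.

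The paper resolves this with a different mechanism that you did not anticipate: after scaling to $[1,n^2]$ it further rounds weights into only $O(\log_{1+\epsilon} n)$ distinct classes, then builds a \emph{bi-clique cover} of size $O((n/\epsilon^2)\log^3 n)$ for each class using range-search machinery (as in~\cite{varadarajan1999approximation}). Within a biclique $(P_{ij},Q_{ij})$ all edges share the same weight $w_i$, so eligibility of $(u,v)$ for $u\in P_{ij}$, $v\in Q_{ij}$ becomes the key equality $w_i+1-y(v)=y(u)$, searchable in a Red--Black tree over $Q_{ij}$. This turns vertex-incident eligible-edge enumeration into $O(\log n)$-time tree queries with deletion on success, giving the required $O(n\cdot\poly(\log n))$ amortized bound for Step~1. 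Even Step~2 (the Hungarian search) is implemented on the biclique cover with per-biclique max-heaps rather than on $\mathcal{D}_1$, so that one consistent combinatorial structure serves both steps. Your coarser rounding to $[0,n/\epsilon]$ leaves $\Theta(n/\epsilon)$ distinct weights, too many for a useful biclique-per-weight decomposition, so the missing idea is really the pair: geometric rounding to logarithmically many weight classes together with biclique covers supporting equality-keyed searches.
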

% The following corollary follows from the above theorem and Lemma \ref{lem:equiv-many-perfect}. 

% \begin{corollary}

% \end{corollary}

In the rest of this section, we prove Theorem \ref{thm:ptas-perfect}. In particular, we show that the bipartite perfect matching algorithm by Gabow and Tarjan \cite{gabow1989faster} can be implemented on the instance $I$ in the mentioned time. 

The Gabow-Tarjan algorithm is based on a popular scheme called the bit-scaling paradigm. The algorithm is motivated by two classic matching algorithms: Hopcroft-Karp \cite{hopcroft1973n} for maximum cardinality bipartite matching with $O(m\sqrt{n})$ running time and Hungarian algorithm \cite{kuhn1956variants} for minimum-cost bipartite matching with $O(mn)$ running time. The Hopcroft-Karp algorithm chooses an augmenting path of the shortest length. The Hungarian algorithm, as mentioned before, chooses an augmenting path whose augmentation cost is the minimum. When the weights on the edges are small an augmenting path of the shortest length approximates the latter path. The Gabow-Tarjan algorithm scales the weights in a manner so that all the effective weights are small. This helps to combine the ideas of the two algorithms, which leads towards an $O(m\sqrt{n}\log (n N))$ time algorithm for (minimum-cost) bipartite perfect matching, where $N$ is the largest edge weight. 
%For better exposition of the Gabow-Tarjan algorithm, we first describe the ideas of the Hungarian algorithm. 

Next, we describe the Gabow-Tarjan algorithm. This algorithm is based on the ideas of the Hungarian algorithm. However, here instead of a feasible matching we compute a \emph{1-feasible} matching. A matching $M$ is called 1-feasible if it satisfies the following two conditions.    
\begin{align}
    y(u)+y(v) & \le c(u,v)+1 \text{ for every edge } (u,v)\label{eq:3}\\
    y(u)+y(v) &= c(u,v) \text{ for every edge } (u,v) \in M\label{eq:4}
\end{align}
Note that the only difference is that now the sum of dual variables $y(u)+y(v)$ can be 1 plus the cost of the edge. This additive error of 1 on every unmatched edge ensures that longer augmenting paths have larger cost. As an effect, the algorithm picks short augmenting paths as in the Hopcroft-Karp algorithm. 

A \emph{1-optimal} matching is a perfect 1-feasible matching. Note that a 1-optimal matching costs more than the original optimal matching. However, as the error is at most +1 for every edge, one can show the following.

\begin{lemma}
\cite{gabow1989faster} Let $M$ be a 1-optimal matching and $M'$ be any perfect matching. Then $c(M')\ge c(M)-n$. 
\end{lemma}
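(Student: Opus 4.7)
The plan is to exploit the two 1-feasibility conditions (\ref{eq:3}) and (\ref{eq:4}) to express both $c(M)$ and $c(M')$ in terms of the common dual potential $\sum_{v} y(v)$, and then compare. Since $M$ is perfect and 1-feasible, every matched edge $(u,v)\in M$ satisfies the equality (\ref{eq:4}), so
\[
c(M)=\sum_{(u,v)\in M} c(u,v)=\sum_{(u,v)\in M}\bigl(y(u)+y(v)\bigr)=\sum_{v\in R\cup B} y(v),
\]
where the last equality uses that each vertex is incident to exactly one edge of $M$.

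For the arbitrary perfect matching $M'$, we only have the weaker inequality (\ref{eq:3}): for every edge $(u,v)\in M'$, $c(u,v)\ge y(u)+y(v)-1$. Summing over all edges of $M'$ and again using that each vertex appears in exactly one edge of $M'$ (since $M'$ is perfect) gives
\[
c(M')\ge \sum_{(u,v)\in M'}\bigl(y(u)+y(v)\bigr)-|M'|=\sum_{v\in R\cup B} y(v)-|M'|.
\]

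Combining the two displays yields $c(M')\ge c(M)-|M'|$. Since $M'$ is a perfect matching on a graph with $n$ vertices, it has at most $n$ edges, so $c(M')\ge c(M)-n$, as required.

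I do not expect any genuine obstacle here: the proof is a telescoping-type argument on the dual potential, and the only subtlety is making sure that the ``each vertex appears once'' bookkeeping is invoked for \emph{both} $M$ and $M'$, which is exactly where perfectness of both matchings is used. The additive slack of $+1$ per edge in (\ref{eq:3}) accumulates exactly once per matched edge of $M'$, which is where the $-n$ term on the right-hand side comes from.
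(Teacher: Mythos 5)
Your proof is correct and is exactly the standard dual-potential argument from Gabow--Tarjan (the paper itself only cites the lemma without reproving it): use the tight condition (\ref{eq:4}) on $M$ to write $c(M)=\sum_v y(v)$, the relaxed condition (\ref{eq:3}) on $M'$ to get $c(M')\ge\sum_v y(v)-|M'|$, and bound $|M'|\le n$. The bookkeeping that perfectness of both matchings is needed to telescope the duals over all vertices is handled properly, so there is nothing to fix.
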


It follows from the above lemma that, to annihilate the error introduced, one can scale the weight of each edge by a factor of $(n+1)$, and then with the scaled weights the cost of a 1-optimal matching is same as the cost of any optimal matching. Let $\overline{c}(u,v)$ be the scaled cost of $(u,v)$, i.e., $\overline{c}(u,v)=(n+1)\cdot {c}(u,v)$. Let $k=\lfloor\log ((n+1) N)\rfloor+1$ be the maximum number of bits needed to represent any new weight. 

The Gabow-Tarjan algorithm runs in $k$ different scales. In each scale $i$ ($1\le i\le k$), the most significant $i$ bits of $\overline{c}(u,v)$ are used for defining the current cost of each edge $(u,v)$. The dual values are also modified to maintain 1-feasibility of an already computed perfect matching in the following way: $y(v)\leftarrow 2y(v)-1$ for every vertex $v$. Then with the current edge costs and dual values, the algorithm computes a 1-optimal matching. 

By the above claim, that any 1-optimal matching is also optimal with scaling factor $n+1$, the 1-optimal matching computed by the algorithm at $k$-th scale must be optimal. 

To find a 1-optimal matching on a particular scale a procedure called \emph{match} is employed which we describe below. Before that we need a definition. Consider any 1-feasible matching $M$. An edge $(u,v)$ is called \emph{eligible} if it is in $M$ or $y(u)+y(v)=c(u,v)+1$. It can be shown that for the purpose of computing a 1-optimal matching it suffices to consider the augmenting paths which consist of eligible edges only. 

\vspace{2mm}
% \vspace{2mm}
\noindent\fbox{
\parbox{\textwidth}{
\noindent \textbf{The} \emph{match} \textbf{procedure}

\noindent Initialize all the dual variables $y(v)$ to 0 and $M$ to $\emptyset$. Repeat the following two steps until a perfect matching is obtained in step 1. 
\begin{enumerate}
    \item Find a maximal set $\mathcal{A}$ of augmenting paths of eligible edges. For each path $P\in \mathcal{A}$, augment the current matching $M$ along $P$ to obtain a new matching which is also denoted by $M$. For each vertex $v\in P\cap B$, decrease $y(v)$ by 1. (This is to ensure that the new matching $M$ also is 1-feasible.) If $M$ is perfect, terminate. 
    \item Employ a Hungarian search to adjust the values of the dual variables (by keeping $M$ 1-feasible), and find an augmenting path of eligible edges. 
\end{enumerate}
}}
\vspace{2mm}

Note that the number of free vertices in every iteration of \emph{match} is at least 1 less than that in the previous iteration, as step 2 always ends with finding an augmenting path. By also showing that the dual value of a variable is increased by at least 1 in every call of Hungarian search, they proved that $O(\sqrt{n})$ iterations are sufficient to obtain a 1-optimal matching. It can be shown that each iteration of \emph{match} can be executed in general bipartite graphs in $O(m)$ time leading to the complexity of $O(m\sqrt{n})$ in each scale. As we have $O(\log (nN))$ scales, in total the running time is $O(m\sqrt{n}\log (nN))$.  

Next, we use the Gabow-Tarjan algorithm to compute a perfect matching for our instance of bipartite perfect matching. In particular, we show that by exploiting the structure of our instance, it is possible to implement every iteration of \emph{match} in $O(n \cdot poly(\log n))$ time. 
%Recall that our instance $I$ of bipartite perfect matching consists of the graph $G=(R=R_0\cup R_1,B=B_0\cup B_1,E)$, where $R_0=S$, $B_0=T$, $R_1$ contains copies of the vertices in $T$, $B_1$ contains copies of the vertices in $S$. $E$ contains all the edges of $R_0\times B_0$ and $R_1\times B_1$, and also the ones in \[\{(u,\hat{u})\mid u\in R_0,\hat{u}\text{ is the copy of } u \text{ in } B_1\}\cup \{(\hat{v},v)\mid v\in B_0,\hat{v}\text{ is the copy of } v \text{ in } R_1\}.\] Note that $G$ can be represented implicitly in $O(n)$ space, where $|S\cup T|=n$. 
First, we show that one can consider a modified instance with bounded aspect ratio of the weights, for the purpose of computing a $(1+\epsilon)$-approximation. 
%For this, we need the following theorem. 
%(The proofs of the results marked with $(*)$ are deferred to the full version.) 
%whose proof appears in the Appendix.  

% \begin{restatable}{theorem}{twoapprox}[$*$]
% \label{thm:2approx}
% There is an $O(n\log n)$ time 2-approximation algorithm for minimum-cost many-to-many matching.  
% \end{restatable}

% The algorithm in Theorem \ref{thm:2approx} is fairly simple. For each point in a set ($S$ or $T$), it adds the pair corresponding to its nearest neighbor in the other set. 

Let OPT be the optimal cost of perfect matching on the instance $I$. Recall that the instance $I$ consists of the graph $G=(R=R_0\cup R_1,B=B_0\cup B_1,E)$. Given the implicit representation of $G$, we compute a 2-approximate solution for minimum-cost many-to-many matching on the two sets of points $R_0$ and $B_0$ using the algorithm in Theorem \ref{thm:2approx}. Let $C$ be the cost of this solution. By Lemma \ref{lem:equiv-many-perfect}, $\text{OPT}\le C\le 2\text{OPT}$. We construct a new instance $I_1$ which consists of the implicit representation of the same graph $G$. Additionally, we assume that any edge with weight more than $C$ in $I_1$ will not be part of the solution, and each edge has cost at least  $\epsilon C/(2n)$. Note that it is not possible to explicitly set the weight of each and every edge if we are allowed to spend $o(n^2)$ time. Thus, for the time being, we make the above assumptions implicitly. Later, we will make them explicit in our algorithm. Note that the construction time of $I_1$ is dominated by the time of the 2-approximation algorithm, which is $O(n\log n)$. We obtain the following lemma.     

\begin{restatable}{lemma}{boundedaspectratio}\label{lem:bounded-aspect-ratio}
Given $I$, one can compute in $O(n\log n)$ time another instance $I_1$ of bipartite perfect matching, such that (i) the weight of every edge is in $[\epsilon C/(2n),C]$ where $\text{\emph{OPT}}\le C\le 2\text{\emph{OPT}}$, (ii) $I_1$ has a perfect matching of weight at most $(1+\epsilon) \text{\emph{OPT}}$, and (iii) any perfect matching in $I_1$ of weight $C'$ is also a perfect matching in $I$ of weight at most $C'$.         
\end{restatable}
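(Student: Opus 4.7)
The plan is to invoke Theorem~\ref{thm:2approx} on the original many-to-many matching instance $I'$ to obtain, in $O(n\log n)$ time, a value $C$ with $\text{OPT}(I')\le C\le 2\,\text{OPT}(I')$. Lemma~\ref{lem:equiv-many-perfect} guarantees that the minimum many-to-many matching cost of $I'$ equals the minimum perfect matching cost $\text{OPT}$ of $I$, so $\text{OPT}\le C\le 2\,\text{OPT}$. I then define $I_1$ from $I$ by two weight modifications: (a) forbid every edge whose original weight exceeds $C$, and (b) raise every remaining edge whose weight is below $\epsilon C/(2n)$ up to exactly $\epsilon C/(2n)$. Property (i) is immediate from this definition.

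The real issue is doing (a) and (b) without touching the $\Omega(n^2)$ edges of $G$ individually. To that end, I would keep the same implicit representation of $G$ used in Lemma~\ref{lem:equiv-many-perfect} and simply remember the single scalar $C$. An $E_0$-edge $(u,v)\in R_0\times B_0$ is resolved on demand: its effective weight is $\max(d(u,v),\epsilon C/(2n))$ if $d(u,v)\le C$, and ``forbidden'' otherwise. Each of the $\Omega(n^2)$ edges of the clique $E_1$ carries the uniform weight $\epsilon C/(2n)$ (since its original weight is $0$), so it needs no per-edge bookkeeping at all. The $O(n)$ edges of $E_2\cup E_3$ are stored explicitly already and can be updated in $O(n)$ extra time. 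Since the $2$-approximation dominates, the overall construction time is $O(n\log n)$.

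For (ii), the optimal perfect matching $M^\ast$ of $I$ has total weight $\text{OPT}\le C$, and since all weights are non-negative each of its edges has weight at most $C$ and survives (a). As $M^\ast$ contains exactly $n$ edges, the total increase contributed by (b) is at most $n\cdot \epsilon C/(2n)=\epsilon C/2\le \epsilon\,\text{OPT}$, so $M^\ast$ is a perfect matching of $I_1$ of cost at most $(1+\epsilon)\,\text{OPT}$. For (iii), the edge set of $I_1$ is a subset of the edge set of $I$, and each surviving edge has weight in $I_1$ no smaller than its weight in $I$, so any perfect matching of $I_1$ of weight $C'$ is automatically a perfect matching of $I$ of weight at most $C'$. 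There is no combinatorial obstacle here; the main point to respect during the write-up is that no weight modification may be materialized edge-by-edge, because the downstream Gabow--Tarjan-based routine needs to stay within $O(n^{3/2}\cdot\text{poly}(\log n))$ time and must therefore query these modified weights lazily from the scalar $C$ and the rule above.
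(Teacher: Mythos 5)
Your proposal matches the paper's construction and proof essentially step for step: compute $C$ via the $O(n\log n)$ 2-approximation, implicitly forbid edges of weight exceeding $C$, implicitly raise all remaining weights to at least $\epsilon C/(2n)$, and then argue (i)--(iii) exactly as the paper does (each edge of the optimal matching has weight at most $\mathrm{OPT}\le C$ so survives, the total rounding increase over $n$ edges is at most $\epsilon C/2\le\epsilon\,\mathrm{OPT}$, and weights in $I_1$ only dominate those in $I$). Your extra remarks about lazy per-edge evaluation and the uniform lift of the $E_1$ clique make explicit what the paper defers to its later data-structure sections, but the underlying argument is the same.
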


\begin{proof}
(i) follows by the above construction of $I_1$. (ii) follows from the facts that OPT does not contain any edge of weight more than $C$ and the weight of an edge in OPT is increased by at most $\epsilon C/(2n)\le \epsilon \text{OPT}/n$ in $I_1$. (iii) follows, as each edge in $I_1$ is also present in $I$ with possibly equal or lesser cost.    
\end{proof}

Henceforth, we solve the problem on $I_1$. Note that the minimum edge weight in $I_1$ is $\epsilon C/(2n)$ and the maximum is $C$. By scaling the weights by $2n/(\epsilon C)$, we can assume wlog that the edge weights in $I_1$ are in $[1,n^2]$. Moreover, we can assume that each edge weight $w$ is rounded up to the nearest integer at least $w$. We can afford to remove the fractions, as each fraction costs less than 1, which is at most $\epsilon \text{OPT}/n$ w.r.t. the original weights. For our convenience, we also divide the weights into $O(\log_{1+\epsilon} n)$ classes as follows. For each weight $a$ (an integer) with $(1+\epsilon)^i \le a < (1+\epsilon)^{i+1}$, $a$ is rounded to the largest integer in the range $[(1+\epsilon)^i,(1+\epsilon)^{i+1})$, where $0\le i\le \lceil 2\log_{1+\epsilon} n\rceil$. We denote this largest integer corresponding to the $i$-th weight class by $w_i$. We note that the above weight scalings are performed implicitly. It is not hard to verify that these still preserve a $(1+O(\epsilon))$-approximate solution, which is sufficient for our purpose. Henceforth, we treat $\epsilon$ as a constant and hide function of $\epsilon$ in time complexity as a constant in $O()$ notation. It will not be hard to verify that the dependency on $\epsilon$ that we hide is $(1/\epsilon)^c$ for some true small constant $c$.   

To implement the Gabow-Tarjan algorithm, we show how the \emph{match} procedure can be implemented efficiently. To implement step 1 of \emph{match}, we store the information about the input graph $G$ in a data structure that we refer to as  MATCH. We allow the following operation on MATCH. In the following, we denote the current matching by $M$.

\vspace{-4mm}
\subparagraph*{FIND$\_$MAXIMAL$\_$APS.} Find a maximal set of vertex disjoint augmenting paths of eligible edges with respect to $M$. 

%\subparagraph*{AUGMENT($P$).} Given an augmenting path $P$, augment $M$ with $P$ to obtain the new matching $M=M\oplus P$. 

%\subparagraph*{DECREASE$\_$KEY($v,\delta$).} Given a vertex $v$ and a value $\delta$, decrease $y(v)$ by $\delta$. 

Given the MATCH data structure, we implement the step 1 of \emph{match} as follows. The dual values are stored in an array indexed by the vertices. Note that the dual values remain fixed in step 1 while the augmenting paths are found. Afterwards, the dual values are updated.  We first make a call to FIND$\_$MAXIMAL$\_$APS to obtain a maximal set $\mathcal{A}$ of paths. For each $P\in \mathcal{A}$, we augment $M$ along $P$ to obtain a new matching $M$. Also, for each $v\in P\cap B$, we decrease $y(v)$ by 1.  

In Section \ref{sec:DS}, we show how to construct and maintain {MATCH} so that the above subroutine can be performed in time $O(n \cdot poly(\log n))$. The building time of {MATCH} is $O(n \cdot poly(\log n))$, and it takes $O(n \cdot poly(\log n))$ space. Thus, by noting that $\mathcal{A}$ contains disjoint paths, step 1 can be implemented in $O(n \cdot poly(\log n))$ time and space. 

We also need another data structure which will help us implement step 2 of \emph{match}, which we refer to as the Hungarian search data structure. As described before, 
%in Section \ref{sec:DS}, 
Hungarian search boils down to 
%the following task. We need to maintain two sets $R'\subseteq R$ and $B'\subseteq B$. Initially, $B'=B$. In each iteration, a vertex is added to $R'$ and a vertex is removed from $B'$. Additionally, each vertex $v$ has a weight $\sigma_v$. In every iteration, the goal is to 
maintaining a bichromatic closest pair $(r,b)$ of two sets 
%. 
%, which is the pair $(r,b)\in R'\times B'$ 
with the minimum $c(r,b)-\sigma_r-\sigma_b$ value. 
However, here we have to be more careful, as for an unmatched eligible edge $(u,v)$, $y(u)+y(v)=c(u,v)+1$. In contrast, in the Hungarian algorithm, we had $y(u)+y(v)=c(u,v)$ in that case. Hence, we have to consider the $c(u,v)+1-y(u)-y(v)$ value as the distance of such an unmatched pair $(u,v)$. This apparently makes our life harder, as now we have to deal with two types of distance functions: one for matched pairs and one for unmatched pairs. However, we use the following observation to consider only one type of distances, which follows from our description of Hungarian search in Section \ref{sec:exact}.    

\begin{observation}
Consider any matched edge $(u,v)$ with $u\in R$ and $v\in B$. In the Hungarian search, if $u\in R'$, i.e., $u$ is already in the forest, then $v$ must also be in the forest, i.e., $v\notin B'$.  
\end{observation}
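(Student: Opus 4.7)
The plan is to prove this invariant by tracking exactly how and when vertices of $R$ are placed into $R'$ during a single Hungarian search, then checking that each such event keeps the matched partner out of $B'$. A key preliminary remark is that within one Hungarian search the matching $M$ itself is static: the algorithm only grows the alternating forest $\mathcal{F}$ and performs dual adjustments, while augmentation along the completed augmenting path happens only after the search terminates. Therefore, for any matched edge $(u,v)$ considered during the search, the identity of $v$ does not change, and it suffices to show the invariant is preserved at the two events that can add a vertex to $R'$.

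First I would handle the base case. Initially $R' = F$, the set of free vertices of $R$. A free vertex is, by definition, incident to no matched edge, so the hypothesis of the observation, namely that $u$ belongs to a matched edge $(u,v)$, is vacuous for every $u$ placed in $R'$ at initialization. Hence the invariant holds at the start of the search.

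Next I would examine the inductive step, where a new vertex $u \in R$ gets added to $R'$. According to the description of the search, this happens in exactly one situation: an admissible edge $(u',v') \in R' \times B'$ is selected, the endpoint $v'$ is not free, and the algorithm adds both $(u',v')$ and $(v',u)$ to $\mathcal{F}$, puts $u$ into $R'$, and removes $v'$ from $B'$. Crucially, $u$ is chosen here as the matching partner of $v'$; that is, $(v',u) \in M$. Because $M$ does not change during the search, $u$'s unique matched partner is $v = v'$, and $v'$ has just been moved into $\mathcal{F}\cap B$ and out of $B'$. Therefore $v \notin B'$ at the instant $u$ enters $R'$, and since subsequent steps never return a vertex to $B'$, the property $v \notin B'$ persists for the remainder of the search.

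The only subtlety worth underscoring is the invariant that $v'$ above is \emph{not} already in $\mathcal{F}$ when the admissible edge $(u',v')$ is considered, which is guaranteed by the rule restricting $v'$ to the current $B'$ and by the fact that a vertex of $B$ joins $\mathcal{F}$ together with its matched partner in $R$ in a single step. This guarantees that the matching partner of any newly added $u$ is introduced to $\mathcal{F}$ simultaneously with $u$ itself, which is precisely the content of the observation. No nontrivial obstacle is anticipated; the argument is a direct bookkeeping verification of the search's invariants.
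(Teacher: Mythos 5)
Your proof is correct and matches the paper's intended reasoning: the paper itself gives no explicit proof but notes the observation ``follows from our description of Hungarian search in Section \ref{sec:exact},'' and your argument is precisely the unwinding of that description (initially $R'$ consists of free vertices which have no matched edges, and the only mechanism for adding a vertex $u'$ to $R'$ is as the matched partner of a $v$ that is simultaneously removed from $B'$, with $M$ static and $B'$ monotonically shrinking throughout the search).
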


Recall that for computing $\delta$, we look into the pairs $(u,v)$ where $u\in R'$ and $v\in B'$. By the above observation, it suffices to probe only unmatched edges. Hence, we can again work with only one distance function $c(r,b)+1-\sigma_r-\sigma_b$ for the purpose of computing $\delta$. As the $+1$ term is common in all distances, we also drop that, and work with our old distance function.  

In Section \ref{sec:DS}, we show how to construct and maintain Hungarian search data structure so that the task of maintaining closest pair can be performed  in total $O(n \cdot poly(\log n))$ time. Moreover, the data structure uses $O(n \cdot poly(\log n))$ construction time and space. 

\begin{lemma}
Using the MATCH and Hungarian search data structures, one can implement the Gabow-Tarjan algorithm on the instance $I_1$ in time $O(n^{3/2}\cdot poly(\log n))$, i.e., a minimum cost bipartite perfect matching in $I_1$ can be computed in time $O(n^{3/2}\cdot poly(\log n))$.  
\end{lemma}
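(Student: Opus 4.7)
The plan is to compose the per-iteration bounds supplied by the MATCH and Hungarian-search data structures with the standard Gabow--Tarjan potential argument. Recall that in the instance $I_{1}$, after the rescaling and rounding accompanying Lemma~\ref{lem:bounded-aspect-ratio}, every edge weight is an integer in $[1,n^{2}]$, so the largest weight $N$ satisfies $\log((n+1)N)=O(\log n)$. Hence the outer bit-scaling loop of Gabow--Tarjan executes only $k=O(\log n)$ scales, and the problem reduces to bounding the cost of a single scale.

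Within one scale, the \emph{match} procedure is run once, and by the Gabow--Tarjan analysis already quoted above in the excerpt it terminates after $O(\sqrt{n})$ executions of its two-step body: step~1 shrinks the set of free vertices via a maximal packing of vertex-disjoint augmenting paths of eligible edges (Hopcroft--Karp style), while step~2, performed only when step~1 cannot augment further, strictly increases the length of the shortest remaining augmenting path by at least one. These two monotone quantities can meet only $O(\sqrt{n})$ times. This argument is oblivious to how steps~1 and~2 are implemented, so I would import it verbatim from~\cite{gabow1989faster}; the only thing left is the per-iteration cost.

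Step~1 is realised by a single call to \textbf{FIND\_MAXIMAL\_APS} on MATCH, followed by augmenting $M$ along the returned family $\mathcal{A}$ and decrementing $y(v)$ at every $v\in P\cap B$ for each $P\in\mathcal{A}$; since the paths in $\mathcal{A}$ are vertex-disjoint their total length is $O(n)$, and by the guarantee of MATCH proved in Section~\ref{sec:DS} the call itself costs $O(n\cdot \poly(\log n))$, so step~1 takes $O(n\cdot \poly(\log n))$ time overall. Step~2 is realised by running the Hungarian search on the Hungarian-search data structure, which maintains a bichromatic closest pair with respect to the distance $c(r,b)-\sigma_{r}-\sigma_{b}$ under at most $O(n)$ insertions into $R'$ and deletions from $B'$ interleaved with dual adjustments; by its guarantee the whole search costs $O(n\cdot \poly(\log n))$. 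Adding the two gives $O(n\cdot \poly(\log n))$ per iteration and therefore $O(\sqrt{n}) \cdot O(n\cdot \poly(\log n)) = O(n^{3/2}\cdot \poly(\log n))$ per scale; (re)building both data structures at the start of a scale, so as to absorb the dual transformation $y(v)\leftarrow 2y(v)-1$ and the revelation of the next significant bit of each edge weight, costs another $O(n\cdot \poly(\log n))$ per scale, which is subsumed. Summing over the $O(\log n)$ scales yields the claimed total.

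The place where care is required, and which I take to be the main obstacle of the argument, is that $G$ is available only implicitly and we can never afford the $\Theta(n^{2})$ edge scan that the textbook version of step~1 performs; the entire content of Section~\ref{sec:DS} is to replace that scan by a near-linear-time geometric and heap-based substitute that still returns a maximal augmenting-path packing and also supports the bichromatic closest-pair queries needed in step~2. I assume those guarantees here. A secondary point is that the implicit restrictions imposed by Lemma~\ref{lem:bounded-aspect-ratio} --- no edge of weight exceeding $C$, and a floor of $\epsilon C/(2n)$ on the remaining weights --- must be respected without edge-by-edge inspection; I would fold these restrictions into both data structures by treating forbidden edges as simply absent, so none of the counts above is affected.
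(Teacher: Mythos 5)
Your proof is correct and follows exactly the route the paper intends: bounded aspect ratio from Lemma~\ref{lem:bounded-aspect-ratio} gives $O(\log n)$ scales, the standard Gabow--Tarjan potential argument gives $O(\sqrt{n})$ iterations of \emph{match} per scale, and each iteration is a single call to FIND\_MAXIMAL\_APS plus one Hungarian search, each costing $O(n\cdot\poly(\log n))$ by the guarantees of Section~\ref{sec:DS}, yielding $O(n^{3/2}\cdot\poly(\log n))$ overall. The paper in fact states this lemma without an explicit proof, relying on the surrounding discussion; your write-up supplies the missing bookkeeping (including the per-scale rebuilding cost and the observation that the weight restrictions from Lemma~\ref{lem:bounded-aspect-ratio} can be absorbed into the bi-clique-cover construction by simply omitting forbidden weight classes), and is consistent with the paper's approach.
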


%The above lemma along with Lemma \ref{lem:bounded-aspect-ratio} complete the proof of Theorem \ref{thm:ptas-perfect}. 

\section{Data Structures}
\label{sec:DS}
\subsection{The MATCH Data Structure}
We would like to construct a data structure where given a matching at a fixed scale, a maximal set of augmenting paths can be computed efficiently. In particular, we are given the graph $G=(R=R_0\cup R_1,B=B_0\cup B_1,E)$, where $R_1$ contains copies of the vertices in $B_0$, $B_1$ contains copies of the vertices in $R_0$. $E=E_0\cup E_1\cup E_2\cup E_3$, where $E_0=R_0\times B_0$, $E_1=R_1\times B_1$, $E_2=\{(u,\hat{u})\mid u\in R_0,\hat{u}\in B_1\}$, and $E_3=\{(\hat{v},v)\mid v\in B_0,\hat{v}\in R_1\}$. Let $n=|R|=|B|$. 
Also, note that each edge weight is an integer $w_i$ for $0\le i\le \lceil (c'\log n)/\epsilon\rceil$, where $c'$ is a constant. Let $E_i^0$ be the set of edges in $R_0\times B_0$ with weights $w_i$. We define a bi-clique cover for each $E_i^0$ as a collection $\mathcal{C}_i=\{(P_{i1},Q_{i1}),(P_{i2},Q_{i2}),\ldots,(P_{it(i)},Q_{it(i)})\}$ where $P_{ij}\subseteq R_0$, $Q_{ij}\subseteq B_0$, all the edges in $P_{ij}\times Q_{ij}$ are in $E_i^0$ and $\cup_{j=1}^{t(i)} P_{ij}\times Q_{ij} = E_i^0$. The size of $E_i^0$ is  $\sum_{j=1}^{t(i)} |P_{ij}|+|Q_{ij}|$. Given the points in $R_0$ and $B_0$, using standard range searching data structures, one can compute such a bi-clique cover of size $O((n/{\epsilon})\log^2 n)$ in $O((n/{\epsilon})\log^2 n)$ time. We note that bi-clique covers are also used for the algorithm in \cite{varadarajan1999approximation}. Let $\mathcal{C}=\mathcal{C}_i$. Thus $\mathcal{C}$ can be computed in  $O((n/{\epsilon}^2)\log^3 n)$ time and space. 

In MATCH we store the bi-clique covers in $\mathcal{C}$ corresponding to the edges in $E_0$. Also, we store the edges in $E_2$ along with their weights in an array $A_2$ indexed by the vertices of $R_0$ and the edges in $E_3$ and their weights in an array $A_3$ indexed by the vertices of $B_0$. For finding an augmenting path efficiently, we need to store additional information. Before describing that, we describe in more detail how the augmenting paths are found.   

The maximal set of augmenting paths are found by a careful implementation of depth first search. In this implementation, vertices can be labeled as marked. Initially all vertices are unmarked. We select any free unmarked vertex of $R$ and initialize a path $P$ at that vertex. $P$ is extended from the last vertex $u$ (in $R$ as an invariant) as follows. We probe an eligible edge $(u,v)$. If $v$ is already marked, the next eligible edge is considered. If no such edge exists, the last two edges (one unmatched and one matched) are deleted from $P$. If $P$ becomes empty, a new path is initialized. For the remaining cases, the following subroutine is called. 
\vspace{-2mm}

\subparagraph*{AUGMENTING$\_$PATH$(v)$.} If $v$ is unmarked and free, we have found an augmenting path; $v$ is marked, $P$ is added to $\mathcal{A}$ and a new path is initialized. In this case, return DONE. If $v$ is unmarked, but matched with another vertex $w$, $(u,v)$ and $(v,w)$ are added to $P$; $v,w$ are marked and the extension continues from $w$ (in $R$).   

\begin{restatable}{observation}{unmatchedonly}\label{obs:unmatched_only}
In the above procedure, we always maintain the invariant that when we extend a path it suffices to probe only unmatched edges. 
\end{restatable}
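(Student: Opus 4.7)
The plan is to establish the invariant by induction on the number of times the path $P$ has been extended, using two elementary structural facts: (i) in the bipartite matching $M$, every vertex is incident to at most one matched edge; and (ii) an alternating path must alternate strictly between unmatched and matched edges.

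First I would handle the base case. The path $P$ is initialized at a free unmarked vertex $u_0 \in R$ chosen in the outer loop of the depth-first search. Since $u_0$ is free, \emph{no} matched edge is incident to it, so any eligible edge probed from $u_0$ is unmatched, and the invariant holds trivially.

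For the inductive step, let $u \in R$ be the current tail of $P$. If $u = u_0$, we are in the base case. Otherwise, $u$ was appended to $P$ during an earlier invocation of AUGMENTING$\_$PATH$(v)$ together with the matched edge $(v,u)$, where $v \in B$ was the previous tail of $P$. Because $M$ is a matching, $(v,u)$ is the unique matched edge incident to $u$; this edge already lies in $P$ (so walking back along it would not extend $P$), and furthermore its other endpoint $v$ is marked. Consequently, every edge $(u,v')$ that could possibly extend $P$ to an unmarked vertex has $v' \neq v$, and hence is unmatched. This same reasoning survives the backtracking step: when the last unmatched/matched edge pair is deleted from $P$, the new tail is again either $u_0$ or a vertex in $R$ reached via a matched edge that still lies in $P$, so the hypothesis of the induction is re-established.

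The only step requiring mild care — and really the only potential obstacle — is to confirm that, whenever we are about to extend, the edge of $P$ immediately preceding the tail is indeed matched (so that alternation forces the next edge to be unmatched). This follows directly from the way AUGMENTING$\_$PATH$(v)$ is written: each successful extension appends the pair $(u,v)$ and $(v,w)$ together, so the edge incident to the tail in $P$ after any extension is always the matched edge $(v,w)$. No genuine difficulty is anticipated; the observation is essentially a bookkeeping consequence of the DFS structure, and I would simply record it as such.
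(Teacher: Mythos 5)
Your proof is correct and follows essentially the same approach as the paper: the paper states (as a claim, without proof) the invariant that the current tail $u$ lies in $R$ and that the last edge of $P$ is matched whenever $u$ is not the initial vertex, and then derives that only unmatched edges need be probed. You supply the inductive argument for that claim explicitly, including the backtracking case, which the paper leaves implicit.
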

\begin{proof}
\begin{claim}
In AUGMENTING$\_$PATH$(v)$, we always maintain the invariants that when we extend a path $P$ from the last vertex $u$, (i) $u\in R$ and (ii) if $u$ is not the first vertex of $P$, the last edge $(v,u)\in P$ is a matched edge.  
\end{claim}

The above claim helps us restrict the search from a vertex of $R$. Now, whenever we extend the path $P$ from the last vertex $u$, there are two cases. Either $P$ contains a single vertex $u$, and in this case $u$ is free; we probe an unmatched eligible edge $(u,v)$. Or, $P$ contains at least two edges, and in this case the last edge of $P$ is a matched edge by the above observation; we need to probe an unmatched eligible edge $(u,v)$.  Thus, it suffices to probe only unmatched edges for the purpose of extension of $P$. 
\end{proof}

Note that in the above procedure we cannot afford to probe all the unmatched  edges. So, we have to implement the above step carefully. First, note that an edge $(u,v)$ ($u \in R$) is never scanned twice. When $(u,v)$ is probed the first time, if $v$ is unmarked, it becomes marked in all the cases. Also, once $v$ is marked, $(u,v)$ is never used to extend $P$. Thus, we can eliminate $(u,v)$ from further probing. 

From the above discussion, as $E_2$ and $E_3$ contain $O(n)$ edges in total they can be probed in $O(n)$ time. However, $E_0$ and $E_1$ contain $\Omega(n^2)$ edges and thus for probing them we need specialized data structures. Next, we describe those. 

Let $c(u,v)$ be the weight of $(u,v)$ at the current scale. $y(u)$ is the dual value of the vertex $u$ which remains fixed throughout the augmenting paths finding process. Again consider the bi-clique covers in $\mathcal{C}$. For each $0\le i\le \lceil (c'\log n)/\epsilon\rceil$, $w_i$ denotes the weight of the edges in $i$-th class. Also, let $\ell$ be the weight class to which the edges in $E_1$ belong, i.e., all of their weights are $w_{\ell}$. For each such $i$ and $1\le j\le t(i)$, we store in \emph{match} the vertices of $Q_{ij}$ in a Red-Black tree $T_{ij}$ with $w_i+1-y(v)$ as the key of each such vertex $v$. Moreover, for each $u\in R_0$, we keep an ordered set of indexes $I(u)=\{(i,j)\mid u\in P_{ij}\}$. Similarly, define the index set $I(v)$ for $v\in B_0$. We also store the vertices of $B_1$ in a Red-Black tree $T_1$ with $w_{\ell}+1-y(v)$ as the key of $v \in B_1$. 

\begin{observation}
MATCH uses $O(n \cdot poly(\log n))$ construction time and space.
\end{observation}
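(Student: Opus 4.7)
The plan is to account for the construction cost of each component of MATCH separately and then sum them. Since $\epsilon$ is treated as a constant, I will check that each individual contribution is $O(n \cdot \poly(\log n))$.

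First, I would invoke the bi-clique cover bound just quoted: the family $\mathcal{C}=\bigcup_i \mathcal{C}_i$ has total size $\sum_{i,j}(|P_{ij}|+|Q_{ij}|) = O((n/\epsilon^2)\log^3 n)$ and is built in the same time by the standard range-searching construction, which I will simply cite. The auxiliary arrays are trivial: $|E_2|=|R_0|\le n$ and $|E_3|=|B_0|\le n$, so the arrays $A_2$ and $A_3$ indexed by those vertex sets are produced in $O(n)$ time and space. The index sets $I(u)$ for $u\in R_0$ and $I(v)$ for $v\in B_0$ can be filled in one pass over $\mathcal{C}$: for each pair $(P_{ij},Q_{ij})$ and each vertex in it, append $(i,j)$ to the appropriate adjacency list. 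Both the total work and the total length are $O(|\mathcal{C}|)$.

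Next, I would handle the Red-Black trees. For each $T_{ij}$, the key $w_i+1-y(v)$ of every $v\in Q_{ij}$ is computed in $O(1)$ given the dual array, so $T_{ij}$ is built in $O(|Q_{ij}|\log |Q_{ij}|)$ time and $O(|Q_{ij}|)$ space. Summing over all $(i,j)$ gives $O(|\mathcal{C}|\log n) = O(n\cdot \poly(\log n))$. The tree $T_1$ on $B_1$ is analogous with key $w_\ell+1-y(v)$ and uses $O(n\log n)$ time and $O(n)$ space. Adding all contributions yields the claimed bound.

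The only subtle point — and the mild obstacle worth flagging — is that the keys $w_i+1-y(v)$ depend on the dual values, so one must argue that a single batch build per invocation is enough. By the description of step~1 of \emph{match}, the dual variables remain fixed throughout the augmenting-path search phase; every key is therefore valid at construction time, and any changes to $y$ after an augmentation will be handled by the dynamic maintenance analyzed separately (inserting or removing a single vertex in some $T_{ij}$ costs only $O(\log n)$). Thus the initial construction of MATCH fits within $O(n\cdot \poly(\log n))$ time and space, completing the observation.
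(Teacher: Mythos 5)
Your proof is correct and follows essentially the same route as the paper, namely a component-by-component accounting dominated by the bi-clique cover and the Red-Black trees, whose total size is $O((n/\epsilon^2)\log^3 n)$; the paper's own proof is just the one-line observation that the space is dominated by the Red-Black trees and that the time bound follows trivially. Your extra paragraph about the keys $w_i+1-y(v)$ depending on the duals is a reasonable sanity check that the paper elides, but it is not needed to establish the observation itself (which only claims a bound per build), so it does not change the substance.
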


The above space bound follows from the fact that the space complexity of MATCH is dominated by the space needed for the Red-Black trees, which is $O(|B_1|)+\sum_{(i,j)} O(|Q_{ij}|)=O((n/{\epsilon}^2)\log^3 n)$, as a Red-Black tree uses linear space. The time bound follows trivially. 
\vspace{-4mm}

\subparagraph*{FIND$\_$MAXIMAL$\_$APS.} Let $F$ be the set of free vertices and $\Pi$ be the set of vertices that are already marked. Initially $\Pi=\emptyset$. Let $\mathcal{A}$ be the set of augmenting paths found so far, which is initialized to $\emptyset$. For each vertex $u\in R_1$, set its $E_1$-failed flag to 0. While there is a vertex $r_1\in (R\cap F)\setminus \Pi$, do the following.
\begin{itemize}
    \item Initialize a path $P$ at $r_1$. 
    \item While $P$ is not empty, do the following. 
    \begin{itemize}
        \item Let $P=\{r_1,b_1,\ldots,r_{\tau-1},b_{\tau-1},r_{\tau}\}$ be the current augmenting path that we need to extend. 
        \item (Case 1. $r_{\tau}\in R_0$) Access the array $A_2$ to find whether the copy of $r_{\tau}$ in $B_1$, i.e., $\hat{r}_{\tau}$, is marked and $(r_{\tau},\hat{r}_{\tau})$ is eligible. 
        \begin{itemize}
            \item If $\hat{r}_{\tau}$ is unmarked and $(r_{\tau},\hat{r}_{\tau})$ is eligible, call the subroutine AUGMENTING$\_$PATH ($\hat{r}_{\tau}$). If this subroutine returns DONE, terminate this while loop. Otherwise, jump to the next iteration. 
            \item Otherwise, search the Red-Black tree $T_{ij}$ where $(i,j)$ is the first index in $I(r_{\tau})$, to find a vertex $b_{\tau}$ with key value $y(r_{\tau})$. If such a vertex $b_{\tau}$ is found, call the subroutine AUGMENTING$\_$PATH($b_{\tau}$). Remove $b_{\tau}$ from all the Red-Black trees with indexes in $I(b_{\tau})$, as it is marked in the subroutine. If this subroutine returns DONE, terminate this while loop. Otherwise, jump to the next iteration. If no such vertex $b_{\tau}$ is found in $T_{ij}$, remove the index $(i,j)$ from $I(r_{\tau})$ and repeat the above step (performed for $(i,j)$) for the next index in  $I(r_{\tau})$. If $I(r_{\tau})$ becomes empty, remove $r_{\tau}$ and $b_{\tau-1}$ from $P$, and continue to the next iteration.
        \end{itemize}

        \item (Case 2. $r_{\tau}\in R_1$) Access the array $A_3$ to find whether the original copy of $r_{\tau}$ in $B_0$, say $v$, is marked and $(r_{\tau},v)$ is eligible. If $v$ is unmarked and $(r_{\tau},v)$ is eligible, call the subroutine AUGMENTING$\_$PATH($v$). If this subroutine returns DONE, terminate this while loop. Otherwise, jump to the next iteration. If $E_1$-failed is not set, i.e., it is 0 for $r_{\tau}$, search the Red-Black tree $T_1$ to find a vertex $b_{\tau}$ with key value $y(r_{\tau})$. If such a vertex $b_{\tau}$ is found, call the subroutine AUGMENTING$\_$PATH($b_{\tau}$). Remove $b_{\tau}$ from $T_1$, as it is marked in the subroutine. If this subroutine returns DONE, terminate this while loop. Otherwise, jump to the next iteration. If no such vertex $b_{\tau}$ is found in $T_1$, set $E_1$-failed flag for $r_{\tau}$ to 1, remove $r_{\tau}$ and $b_{\tau-1}$ from $P$, and continue to the next iteration. 
    \end{itemize}
\end{itemize}

The above procedure is self-explanatory. Next, we prove its correctness and bound the implementation time.  

\begin{restatable}{lemma}{matchcorrectness}\label{lem:match-correctness}
FIND$\_$MAXIMAL$\_$APS correctly computes a maximal set of disjoint augmenting paths. 
\end{restatable}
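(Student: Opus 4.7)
The plan is to establish three properties: every element of $\mathcal{A}$ is an augmenting path consisting of eligible edges, the paths in $\mathcal{A}$ are pairwise vertex-disjoint, and $\mathcal{A}$ is maximal among such collections. Together these yield the lemma.

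I would first verify eligibility correctness of every probed edge. For an edge $(r_\tau, v) \in E_0$ with $v \in Q_{ij}$ (weight $w_i$), searching $T_{ij}$ for key $y(r_\tau)$ returns exactly a vertex $v$ with $w_i + 1 - y(v) = y(r_\tau)$, i.e., $y(r_\tau) + y(v) = c(r_\tau, v) + 1$, matching the eligibility condition for an unmatched edge. The $E_1$ case via $T_1$ and $w_\ell$ is analogous, and for $E_2, E_3$ the arrays $A_2, A_3$ give the unique such partner of each vertex in $O(1)$ time, so eligibility is verified directly. Matched edges traversed from $b_\tau$ to $r_{\tau+1}$ inside AUGMENTING$\_$PATH are eligible by the definition of eligibility on matched edges. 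Next, by induction on extensions of the inner while loop, the current $P$ is a valid alternating path starting at $r_1 \in R \cap F$, alternating unmatched and matched edges, with its last vertex always in $R$ by the claim inside Observation \ref{obs:unmatched_only}. A path enters $\mathcal{A}$ only when AUGMENTING$\_$PATH$(v)$ returns DONE, which requires $v$ to be unmarked and free in $B$, so $P \cup \{v\}$ is a genuine augmenting path. Disjointness follows because every vertex entering $P$ is immediately placed in $\Pi$: marked $B$-vertices are removed from the Red-Black trees, array probes for $E_2, E_3$ explicitly check the mark bit, and the starting vertex $r_1$ must be added to $\Pi$ upon initialization (otherwise the outer guard $r_1 \in (R \cap F) \setminus \Pi$ would not make progress), ruling out two committed paths sharing any vertex.

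I would then justify the two pruning shortcuts---removing an index $(i,j)$ from $I(r_\tau)$ and setting $E_1$-failed$(r_\tau) = 1$. During step 1 all dual values are fixed and the Red-Black trees only shrink, so once a search in $T_{ij}$ (respectively $T_1$) for key $y(r_\tau)$ fails, no subsequent search on behalf of the same $r_\tau$ could succeed; since $r_\tau$ is marked on first visit and never revisited after the current branch backtracks past it, these pruning steps lose no edge that could belong to any later augmenting path.

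For maximality I would argue by contradiction. Suppose an augmenting path $P^* = (u^*, v_1^*, r_1^*, \ldots, v_k^*)$ of eligible edges exists vertex-disjoint from every path in $\mathcal{A}$. Its start $u^* \in R \cap F$ must be in $\Pi$ at termination (else the outer while loop would not have exited), so a DFS rooted at $u^*$ was executed and backtracked without committing. An induction along $P^*$ then shows that when that DFS reached a prefix $(u^*, v_1^*, \ldots, r_j^*)$, the next vertex $v_{j+1}^*$ was not yet marked: it is not in $\bigcup \mathcal{A}$ by disjointness, and if it were marked by some earlier failed DFS branch, applying the same argument to the earliest such occurrence would yield a symmetric contradiction. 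Hence the DFS from $u^*$ would have extended all the way to the free endpoint $v_k^*$, committed an augmenting path intersecting $P^*$, and contradicted the assumed backtrack. The main obstacle is exactly this ``earliest occurrence'' induction: it is the standard DFS-exhaustiveness trick, but here it must be verified to interact correctly with the Red-Black tree pruning and the $E_1$-failed flag. The argument goes through because during a single step-1 phase the marked set is monotone non-decreasing and the trees are monotone non-increasing, so any failure to extend from $r_\tau$ along a particular index or edge class is truly permanent and cannot spuriously cut off the path $P^*$.
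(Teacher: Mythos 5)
Your proposal is correct, but it is organized quite differently from the paper's proof. The paper black-boxes the generic correctness of the Gabow--Tarjan depth-first procedure for finding a maximal set of disjoint augmenting paths of eligible edges, and reduces the lemma to a single claim: the implicit representation never \emph{misses} an eligible edge, i.e.\ for any eligible unmatched $(r_{\tau},v)$ with $v$ unmarked, $v$ is still present in the relevant tree ($T_{ij}$ or $T_1$) with key $w_i+1-y(v)=y(r_{\tau})$ (or is checked directly via $A_2,A_3$), so the search returns it when needed. You instead re-derive the whole statement from first principles: soundness (everything probed is eligible, by the same key computation run in the reverse direction), path validity and alternation, disjointness via marking, safety of the two pruning rules (index removal and the $E_1$-failed flag, justified by monotonicity of the marked set and of the trees), and maximality by the DFS-exhaustiveness contradiction. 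The completeness claim that the paper isolates is exactly what powers your maximality induction, so your argument subsumes theirs; conversely, the parts you sketch but do not fully execute (the ``earliest occurrence'' induction) are precisely the parts the paper absorbs into the citation of Gabow--Tarjan, so neither write-up is less rigorous than the other on that point. Your approach buys self-containedness and surfaces a genuine, if minor, omission in the pseudocode (the root $r_1$ is never explicitly marked, which your reading repairs and which the disjointness argument needs); the paper's approach buys brevity by identifying the single non-standard ingredient --- that bi-clique covers plus key-indexed balanced trees faithfully simulate adjacency-list scanning of eligible edges --- and verifying only that.
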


\begin{proof}
The proof of the lemma follows from the proof of correctness of the Gabow-Tarjan algorithm, assuming all eligible edges are considered if needed, while extending the path $P$ from $r_{\tau}$. Consider any eligible edge $e=(r_{\tau},v)$. Here we show that $e$ is considered for extension of $P$. If $v$ is marked, $v$ was already included at some augmenting path. As we are looking for disjoint paths, we do not need to consider the edge $(r_{\tau},v)$, and hence the correctness follows in this case. Thus, wlog, we can assume that $v$ is unmarked. First, assume $r_{\tau}\in R_0$. Then if $v\in B_1$, $e$ is explicitly considered by the subroutine. Otherwise, $v\in B_0$. In this case, suppose $(i,j)$ be an index such that $e\in (P_{ij},Q_{ij})$. Note that initially $(i,j)$ is in $I(r_{\tau})$. Thus, if needed, $T_{ij}$ can be searched, and as $v$ is unmarked, $v\in T_{ij}$. Now, as $e$ is eligible and unmatched (by Observation \ref{obs:unmatched_only}), $y(r_{\tau})+y(v)=w_{i}+1$. Thus, the key of $v$, $w_i+1-y(v)$ must be equal to $y(r_{\tau})$ and $v$ can be found eventually while searching $T_{ij}$. Now, assume $r_{\tau}\in R_1$. Then if $v\in B_0$, again $e$ is explicitly considered by the subroutine. Otherwise, $v\in B_1$. In this case, $v$ is in the tree $T_1$, as $v$ is unmarked. Now, as $e$ is eligible and unmatched (by Observation \ref{obs:unmatched_only}), $y(r_{\tau})+y(v)=w_{\ell}+1$. Thus, the key of $v$, $w_{\ell}+1-y(v)$ must be equal to $y(r_{\tau})$ and $v$ can be found eventually while searching $T_1$.   
\end{proof}

\begin{restatable}{lemma}{matchtime}\label{lem:match-time}
FIND$\_$MAXIMAL$\_$APS can be implemented in $O(n\cdot poly(\log n))$ time. 
\end{restatable}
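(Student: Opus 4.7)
The plan is a global amortized (token-charging) analysis: I charge every atomic operation performed by FIND$\_$MAXIMAL$\_$APS to one of a small number of \emph{progress events}, each of which occurs at most $O(n \cdot \poly(\log n))$ times, and then multiply by the per-operation cost of $O(\log n)$ for a Red-Black tree search or update.

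First I would classify the work done in a single iteration of the inner while loop (extending $P$ from its current last vertex $r_\tau$) as one of the following: (i) an $O(1)$ probe of the array $A_2$ or $A_3$; (ii) an $O(\log n)$ search of some Red-Black tree $T_{ij}$ or $T_1$; (iii) a pop of the last two vertices of $P$; and (iv) whenever a new vertex $b_\tau \in B_0$ is marked inside AUGMENTING$\_$PATH, an $O(|I(b_\tau)| \log n)$ burst of deletions removing $b_\tau$ from every tree $T_{ij}$ with $(i,j) \in I(b_\tau)$. The key monotone invariants are that a vertex is never marked twice, an index $(i,j)$ is never removed from a given $I(r_\tau)$ twice, and the $E_1$-failed flag for any $r_\tau \in R_1$ switches from $0$ to $1$ at most once.

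Next I would set up the charging. A successful extension marks two new vertices, so summed over the whole execution there are at most $O(n)$ such extensions, and at most $O(n)$ pops. A failed Red-Black tree search of $T_{ij}$ in Case~1 strictly shrinks some $I(r_\tau)$ by one index; hence, over the whole execution, the number of failed $T_{ij}$ searches is at most $\sum_u |I(u)|$, which equals the total size of the bi-clique cover $\mathcal C$ and is therefore $O((n/\epsilon^2) \log^3 n)$. A failed $T_1$ search in Case~2 sets an $E_1$-failed flag, which can happen at most $|R_1| = O(n)$ times. The number of iterations of the inner loop equals (successful extensions) $+$ (pops) $+$ (failed tree searches), so the total number of $A_2/A_3$ probes is also $O(n \cdot \poly(\log n))$, and the total cost of the tree searches is $O(\log n)$ times this. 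Finally, the total deletion cost from item (iv) telescopes to $\sum_v |I(v)| \cdot O(\log n) = O((n/\epsilon^2) \log^4 n)$. Putting everything together, the overall running time is $O(n \cdot \poly(\log n))$.

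The main obstacle, which I would argue carefully, is accounting for \emph{revisits} of the same $r_\tau$ after the DFS backtracks into it from a subtree: I must show that each such revisit either marks new vertices (charged to a fresh marking event), removes another index from $I(r_\tau)$ (or sets $E_1$-failed), or terminates the search at $r_\tau$ by popping it. In particular, when the previously successful tree search at $T_{ij}$ returned a $b_\tau$ which got marked and removed from $T_{ij}$, a subsequent search of $T_{ij}$ with the same key $y(r_\tau)$ is still warranted: it either finds no such vertex (charging to an index removal from $I(r_\tau)$) or finds a fresh unmarked one (charging to a new marking). This rules out any ``spinning'' at $r_\tau$ that could inflate the bound, and together with Lemma~\ref{lem:match-correctness} completes the argument.
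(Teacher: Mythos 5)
Your proposal is correct and takes essentially the same approach as the paper: both arguments are amortized charging schemes that bound the total work by (a) failed Red-Black tree searches, each charged to the permanent removal of an index from some $I(u)$ (or to setting the $E_1$-failed flag), for a total of $\sum_{(i,j)} O(|P_{ij}|)$ plus $O(n)$; (b) successful searches and the ensuing deletions, each charged to a vertex $v$ being marked once and removed from the $|I(v)|$ trees containing it, for a total of $\sum_{(i,j)} O(|Q_{ij}|)$; and (c) $O(n)$ array probes. The paper packages this via per-vertex quantities $L_u$ and $K_v$ rather than global token counts, but the decomposition and the resulting $O((n/\epsilon^2)\log^4 n)$ bound are the same; your explicit discussion of DFS revisits makes a point the paper leaves implicit in the definition of $L_u$, but it is the same underlying observation (each revisit must either delete a vertex, shrink $I(r_\tau)$, set a flag, or pop).
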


\begin{proof}
For any vertex $u\in R$, let $L_u$ be the total time spent over the course of the procedure to extend paths that end at $u$, in particular, to find eligible edges of the form $(u,v')$ with $v'\in B$. Also, for $v\in B$, let $K_v$ be the total time spent to remove $v$ from the Red-Black tree(s) once it is marked. Note that the time of FIND$\_$MAXIMAL$\_$APS is dominated by the time $\sum_{u\in R} L_u+\sum_{v\in B} K_v$. Let us analyze $\sum_{u\in R} L_u$ at first. In the first case, $u\in R_0$. As mentioned before, probing of the edges in $E_2\cup E_3$ can be done in $O(n)$ time, using the arrays $A_2$ and $A_3$. So, let us focus on the remaining edges. Note that when we try to find an eligible edge for $u$, we search in $T_{ij}$, where $(i,j)\in I(u)$. Either no desired vertex $v$ is found, in which case, $(i,j)$ is removed from $I(u)$. Otherwise, a vertex $v\in Q_{ij}$ is found, but it is deleted from all Red-Black trees, and so it cannot appear in future search. Thus, removal of indexes from $I(u)$ and failed searches cost $O(| I(u)|\log n)$ time in total. The total number of vertices in $B_0$ that successfully appear in the result of the searches corresponding to all free vertices in $R_0$ is $\sum_{(i,j)} |Q_{ij}|$. Hence, \[\sum_{u\in R_0} L_u = \sum_{u\in R_0} O(| I(u)|\log n)+\sum_{(i,j)} O(|Q_{ij}|\log n)=\sum_{(i,j)} O((|P_{ij}|+|Q_{ij}|)\log n).\] 
The last inequality follows, as the sum of the sizes of the index sets in $\{I(u)\mid u\in R_0\}$ is bounded by the sum of the sizes of the sets in $\{P_{ij}\}$. 

Now, consider the case when $u\in R_1$. In this case, when we try to find an eligible edge $(u,v)$ for $u$ with $v\in B_1$, we search in $T_1$. Either no such $v$ is found, in which case, $E_1$-failed flag for $u$ is set to 1, and $T_1$ is never searched w.r.t. $u$ again. Or, such a vertex $v$ is found, in which case, it is deleted from $T_1$, and so it cannot appear in future search. Hence, \[\sum_{u\in R_1} L_u = O(|R_1|\log n)+O(|B_1|\log n)=O(n\log n).\] 

For $v\in B_1$, $K_v$ is bounded by $O(\log n)$. For $v\in B_0$, $K_v$ is bounded by the time needed to delete $v$ from all $T_{ij}$ in all of which it appears. Thus, \[\sum_{v\in B} K_v=O(|B_1|\log n)+\sum_{(i,j)} O(|Q_{ij}|\log n).\]
As $\sum_{(i,j)} |P_{ij}|+|Q_{ij}|=O((n/{\epsilon}^2)\log^3 n)$, it follows that, \[\sum_{u\in R} L_u+\sum_{v\in B} K_v=O((n/{\epsilon}^2)\log^4  n).\] Hence, the lemma follows. 
\end{proof}

\subsection{Hungarian Search Data Structure}
Recall that in Hungarian search, we need to maintain two sets $R'\subseteq R$ and $B'\subseteq B$. Initially, $R'=\emptyset$ and $B'=B$. In each iteration, a vertex is added to $R'$ and removed from $B'$. Additionally, each vertex $v$ has a weight $\sigma_v$. In every iteration, the goal is to maintain the bichromatic closest pair, which is the pair $(u,v)\in R'\times B'$ with the minimum $c(u,v)-\sigma_u-\sigma_v$ value. 

Consider the bi-clique cover $\mathcal{C}$ and in particular a pair $(P_{ij},Q_{ij})$ in $\mathcal{C}$. Then any edge in $P_{ij}\times Q_{ij}$ has the same weight $w_i$. Thus, the pair $(u,v)$ with the maximum $\sigma_u+\sigma_v$ value has the minimum $c(u,v)-\sigma_u-\sigma_v$ value in $P_{ij}\times Q_{ij}$. In other words, it is sufficient to keep track of a vertex $u\in P_{ij}$ with the maximum $\sigma_u$ value and a vertex $v\in Q_{ij}$ with the maximum $\sigma_v$ value. 

Now, we describe the Hungarian Search data structure. For each index $(i,j)$ in $\mathcal{C}$, we construct two max-heaps $H_{ij}^r$ and $H_{ij}^b$, for $P_{ij}\cap R'$ and $Q_{ij}\cap B'$, respectively. Initially, $H_{ij}^r$ is empty and $H_{ij}^b$ contains all the vertices in $Q_{ij}$. The key value of each vertex $v$ is its weight $\sigma_v$. Note that using $H_{ij}^r$ and $H_{ij}^b$, the pair $(u,v)\in (P_{ij}\cap R')\times (Q_{ij}\cap B')$ with the maximum $\sigma_u+\sigma_v$ value can be found in $O(1)$ time. Similarly, we construct two heaps $H_1^r$ and $H_1^b$ for the vertices in $R_1\cap R'$ and $B_1\cap B'$, respectively. Initially, $H_1^r$ is empty and $H_1^b$ contains all the vertices in $B_1$. The key value of each vertex $v$ is its weight $\sigma_v$. Again using $H_1^r$ and $H_1^b$, the pair $(u,v)\in (R_1\cap R')\times (B_1\cap B')$ with the maximum $\sigma_u+\sigma_v$ value can be found in $O(1)$ time. We also construct another max-heap $H_{23}$ to store the edges in $E_2\cup E_3$. Initially, it is empty. In every iteration, it contains only those edges $(u,v)$ such that $u\in R'$ and $v\in B'$. Finally, to keep track of the global maximum pair, we create another max-heap $H$. $H$ stores a maximum pair of $(P_{ij}\cap R')\times (Q_{ij}\cap B')$ for each index $(i,j)$ and maximum pairs of $(R_1\cap R')\times (B_1\cap B')$ and $H_{23}$.  

Note that the space complexity of the  data structure is $\sum_{(i,j)} O(|P_{ij}|+|Q_{ij}|)+O(n)=O((n/{\epsilon}^2)\log^3 n)$. This follows, as a max-heap uses linear space.

\begin{observation}
The Hungarian Search  data structure uses $O(n \cdot poly(\log n))$ construction time and space.
\end{observation}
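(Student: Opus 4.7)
The space bound is essentially laid out in the paragraph immediately preceding the observation. The plan is to argue that the total storage is dominated by the max-heaps $H_{ij}^r, H_{ij}^b$ taken over all bi-clique pairs $(P_{ij},Q_{ij}) \in \mathcal{C}$, plus the $O(n)$ entries held in $H_1^r, H_1^b, H_{23}$ and the global heap $H$. Since a heap uses space linear in the number of its elements, this sums to $\sum_{(i,j)} O(|P_{ij}| + |Q_{ij}|) + O(n) = O((n/\epsilon^2)\log^3 n)$, invoking the bi-clique cover size bound that was already established for the MATCH data structure.

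For the construction time, the plan is to split the cost into four pieces and bound each separately. First, computing the bi-clique cover $\mathcal{C}$ itself takes $O((n/\epsilon^2)\log^3 n)$ time by the same range-searching construction already used by MATCH, so this step can simply be reused rather than repeated. Second, for each index $(i,j)$, the heap $H_{ij}^b$ is built from $Q_{ij}$ by standard bottom-up (linear-time) heapification in $O(|Q_{ij}|)$ time, while $H_{ij}^r$ is initialized as an empty heap at $O(1)$ cost. Third, $H_1^b$ is built from $B_1$ by the same linear-time heapification in $O(n)$ time, and $H_1^r, H_{23}$ are initialized empty. Fourth, the global heap $H$ has one representative per bi-clique index together with two additional representatives for $(R_1 \cap R') \times (B_1 \cap B')$ and for $H_{23}$, so it has $O((n/\epsilon^2)\log^2 n)$ initial entries and can be built by another bottom-up heapification in time proportional to its size.

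Summing the four contributions gives a total construction time of $O((n/\epsilon^2)\log^3 n)$, which, under the convention adopted earlier in the paper of hiding the $1/\epsilon$ factors as constants inside $O(\cdot)$, is exactly $O(n \cdot poly(\log n))$, matching both the time and space claims of the observation.

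No step here is genuinely hard; the argument is bookkeeping on the bi-clique cover size plus standard heap-construction complexity. The only point requiring a small amount of care is confirming that the global heap $H$ is constructed by bottom-up heapification rather than $|H|$ successive insertions, so that it contributes no extra $\log$ factor; once this is observed, the remaining estimates are immediate and the bounds stated in the observation follow.
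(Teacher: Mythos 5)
Your proposal is correct and takes essentially the same approach as the paper: the paper simply observes (in the sentence immediately preceding the observation) that the space is $\sum_{(i,j)} O(|P_{ij}|+|Q_{ij}|)+O(n)=O((n/\epsilon^2)\log^3 n)$ because heaps use linear space, and treats the matching construction-time bound as following trivially. You supply the same space accounting and additionally spell out the heapification-based time bound, which the paper leaves implicit.
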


Next, we describe a procedure which will help us implement Hungarian search. This procedure takes a pair $(u,v)$ as input, where $u\in R\setminus R'$ and $v\in B'$. We need to add $u$ to $R'$ and remove $v$ from $B'$ while maintaining the correct maximum pair. 

\subparagraph*{UPDATE$\_$CLOSEST$\_$PAIR$(u,v)$.} If $v\in B_0$, remove $v$ from all $H_{ij}^b$ that contains $v$. Also remove $(\hat{v},v)$ from $H_{23}$ if its in $H_{23}$. If $v\in B_1$, remove $v$ from $H_1^b$. Also remove the edge $(u',v)$ with $u'\in R_0$ from $H_{23}$ if its in $H_{23}$. If $u\in R_0$, add $u$ to all $H_{ij}^r$ such that $P_{ij}$ contains $u$. Also add the edge $(u,\hat{u})$ to $H_{23}$ if $\hat{u}\in B'$. If $u\in R_1$, add $u$ to $H_1^r$. Also add the edge $(u,v')$ with $v'\in B_0$ to $H_{23}$ if $v'\in B'$. Update the maximum pairs in $H$ accordingly by selecting the updated maximum pairs from the other max-heaps.  

\begin{restatable}{lemma}{hstime}\label{lem:hstime}
Hungarian search can be performed in $O(n \cdot poly(\log n))$ time. 
\end{restatable}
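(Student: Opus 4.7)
The plan is to bound the total work of a single Hungarian search by charging each heap operation either to the bi-clique cover $\mathcal{C}$ or to the linear-sized edge sets $E_1\cup E_2\cup E_3$, and then to show that the bookkeeping in the global heap $H$ does not increase the bound beyond $O(n\cdot\mathrm{poly}(\log n))$.

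First, I would observe that a Hungarian search performs at most $n$ calls to UPDATE$\_$CLOSEST$\_$PAIR, since every call permanently removes one vertex from $B'$ and adds one vertex to $R'$. Each individual call touches four families of heaps: the $H_{ij}^r$ and $H_{ij}^b$ associated with the bi-clique cover, the pair $(H_1^r,H_1^b)$, the edge heap $H_{23}$, and the global heap $H$. A standard red-black/binary heap supports insert, delete, and find-max in $O(\log n)$ time, so the whole analysis reduces to counting the total number of operations performed in each family.

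For the bi-clique heaps, every vertex $u\in R_0$ is inserted into exactly the heaps $H_{ij}^r$ with $u\in P_{ij}$, and every $v\in B_0$ is deleted from exactly the heaps $H_{ij}^b$ with $v\in Q_{ij}$, each at most once during the entire search. Hence the total number of operations on the bi-clique heaps is bounded by $\sum_{(i,j)}(|P_{ij}|+|Q_{ij}|)=O((n/\epsilon^2)\log^3 n)$. For $H_1^r$ and $H_1^b$ we perform $O(|R_1|+|B_1|)=O(n)$ operations, and for $H_{23}$ each edge of $E_2\cup E_3$ is inserted at most once and deleted at most once, contributing $O(n)$ operations. Multiplying by the $O(\log n)$ per-operation cost yields $O((n/\epsilon^2)\log^4 n)$ across all sub-heaps.

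It remains to control the work spent maintaining $H$. Each entry of $H$ corresponds to a single sub-heap and is refreshed only when the maximum of that sub-heap changes, so each sub-heap operation triggers at most one $O(\log n)$ update in $H$. Since the total number of sub-heap operations is already bounded as above, the total cost of updating $H$ is also $O((n/\epsilon^2)\log^4 n)$. The retrieval of the global bichromatic closest pair after each step then reduces to reading the root of $H$ in $O(1)$ time, and the correctness follows from the observation preceding the data structure: within each bi-clique block $P_{ij}\times Q_{ij}$ (respectively $R_1\times B_1$), the edge weight is a common constant $w_i$ (respectively $w_\ell$), so the minimizer of $c(u,v)-\sigma_u-\sigma_v$ coincides with the maximizer of $\sigma_u+\sigma_v$, which the block heaps track exactly. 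The main obstacle here is simply ensuring that the edges of $E_2\cup E_3$ are inserted into $H_{23}$ only when both endpoints currently belong to $R'$ and $B'$, which is handled by the conditional inserts in UPDATE$\_$CLOSEST$\_$PAIR; because $|E_2\cup E_3|=O(n)$, this does not affect the asymptotic bound. Summing the contributions gives the claimed $O(n\cdot\mathrm{poly}(\log n))$ running time.
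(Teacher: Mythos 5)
Your proposal is correct and takes essentially the same approach as the paper: both bound the per-operation cost by $O(\log n)$, charge bi-clique heap operations against $\sum_{(i,j)}(|P_{ij}|+|Q_{ij}|)$, charge operations on $H_1^r,H_1^b,H_{23}$ against $O(n)$, and observe that updates to the global heap $H$ are triggered one-for-one by sub-heap operations, yielding $O((n/\epsilon^2)\log^4 n)$ in total.
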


\begin{proof}
We note that there are at most $n$ iterations where in each iteration a vertex $u\in R\setminus R'$ is added to $R'$ and a vertex $v\in B'$ is removed from $B'$. Moreover, each vertex of $R$ is added to the data structure only once and each vertex of $B$ is removed from the data structure only once. To implement each iteration, we make a call to UPDATE$\_$CLOSEST$\_$PAIR($u,v$). The correctness follows trivially by construction of the data structure. Next, we analyze the time complexity. First, note that all the insertions and deletions in $H_1^b$, $H_1^r$ and $H_{23}$ can be performed in total time $O(n\log n)$. Also, the insertions of the vertices in $R$ to all $H_{ij}^r$ takes in total $\sum_{(i,j)} O(|P_{ij}|\log n)$ time. Similarly, the deletions of the vertices in $B$ from all $H_{ij}^b$ takes in total $\sum_{(i,j)} O(|Q_{ij}|\log n)$ time.  Note that an update to an element (maximum pair) in $H$ is triggered by an update in another max-heap. As we can have $O(n)+\sum_{(i,j)} O(|P_{ij}|+|Q_{ij}|)$ updates in total in other max-heaps, the updates in $H$ can be performed in total time $O(n\log n)+\sum_{(i,j)} O(|P_{ij}|+|Q_{ij}|)\log n$. Thus, the time taken over all iterations is $O(n\log n)+\sum_{(i,j)} O(|P_{ij}|+|Q_{ij}|)\log n=O((n/{\epsilon}^2)\log^4  n)$, which completes the proof of the lemma.         
\end{proof}

\bibliography{bib}
\end{document}